\newtheorem{theorem}{Theorem}
\newtheorem{observation}[theorem]{Observation}
\newtheorem{definition}[theorem]{Definition}
\newtheorem{proposition}[theorem]{Proposition}
\newtheorem{lemma}[theorem]{Lemma}
 \newcommand{\HHH}{\mathcal{H}}
 \newcommand{\PPP}{\mathcal{P}}
\newcommand{\varsubscript}[2]{{#1}_{\mbox{\footnotesize #2}}}
\newcommand{\BC}{{\mathcal B}^+}
\newcommand{\BT}{{\mathcal B}^\div}
\newcommand{\portaledges}{\varsubscript{E}{portal}}
\newcommand{\MGthin}{\varsubscript{\MG}{thin}}
\newcommand{\preprocess}{\mbox{\sc Preprocess}}
\newcommand{\planarize}{\mbox{\sc Planarize}}
\newcommand{\thinning}{\mbox{\sc Thinning}}
\newcommand{\longvar}[1]{\mathop{\mathrm{#1}}\nolimits}
\newcommand{\OPT}{\longvar{OPT}}
\newcommand{\MG}{\longvar{MG}}
\newcommand{\CG}{\longvar{CG}}
\newcommand{\SPT}{\longvar{SPT}}
\newcommand{\tw}{\operatorname{tw}}
\newcommand{\dist}{\operatorname{dist}}
\newcommand{\vrt}[1]{V(#1)}
\newcommand{\vrtG}{\vrt{G}}
\newcommand{\edge}[1]{E(#1)}
\newcommand{\edgeG}{\edge{G}}
\newcommand{\edgeT}{\edge{T}}
\newcommand{\Gthin}{\varsubscript{G}{thin}}
\newcommand{\Gspan}{\varsubscript{G}{span}}
\newcommand{\Oof}{\mathcal{O}}
\newcommand{\set}[1]{\ensuremath{\{#1\}}}
\newcommand{\length}[1]{\ensuremath{\ell(#1)}}
\renewcommand{\epsilon}{\varepsilon}
\renewcommand{\emptyset}{\varnothing} 
\newcommand{\CompClass}[1]{\ensuremath{\mathsf{#1}}\xspace}
\newcommand{\NP}{\CompClass{NP}}
\newcommand{\APX}{\CompClass{APX}}
\newcommand{\PTAS}{\CompClass{PTAS}}
\newcommand{\poly}{\ensuremath{\mathrm{poly}}}
\newcommand{\myproblemname}[1]{\ensuremath{\mbox{\sc #1}}\xspace}
\newcommand{\steiner}{\myproblemname{Steiner Tree}}
\newcommand{\steinerforest}{\myproblemname{Steiner Forest}}
\newcommand{\subtsp}{\myproblemname{Subset Tsp}}
\newcommand{\survive}{\myproblemname{Survivable Network}}
\begin{document}

\title{Polynomial-time approximation schemes for subset-connectivity
    problems in bounded-genus graphs}


\author{Glencora Borradaile \and
        Erik D.\ Demaine \and
        Siamak Tazari
}





\maketitle

\begin{abstract}
  We present the first polynomial-time approximation schemes (PTASes)
  for the following subset-connectivity problems in edge-weighted
  graphs of bounded genus: Steiner tree, low-connectivity
  survivable-network design, and subset TSP.  The schemes run in $\Oof(n
  \log n)$ time for graphs embedded on both orientable and
  nonorientable surfaces.  This work generalizes the PTAS frameworks
  of Borradaile, Klein, and Mathieu (2007) from
  planar graphs to bounded-genus graphs: any future problems shown to
  admit the required structure theorem for planar graphs will
  similarly extend to bounded-genus graphs.
\end{abstract}


\section{Introduction}

In many practical scenarios of network design, input graphs
have a natural drawing on the sphere or equivalently the plane.
In most cases, these embeddings have few crossings, either to avoid digging
multiple levels of tunnels for fiber or cable or to avoid building overpasses
in road networks.  But a few crossings are common, and can easily come in
bunches where one tunnel or overpass might carry several links or roads.
Thus we naturally arrive at graphs of small (bounded) genus,
which is the topic of this work.

We develop a $\PTAS$ framework for subset-connectivity problems on
edge-weighted graphs of bounded genus.  In general, we are given a
subset of the nodes, called \emph{terminals}, and the goal is to
connect the terminals together with some substructure of the graph by
using cost within $1+\epsilon$ of the minimum possible cost.  Our
framework applies to three well-studied problems in this framework.
In \steiner, the substructure must be connected, and thus forms a
tree.  In \subtsp, the substructure must be a cycle; to guarantee
existence, the cycle may traverse vertices and edges multiple times,
but pays for each traversal.  In $\{0,1,2\}$-edge-connectivity
\survive, the substructure must have $\min\{c_x, c_y\}$ edge-disjoint
paths connecting vertices $x$ and~$y$, where each $c_x \in
\{0,1,2\}$; we allow the substructure to include multiple copies of an
edge in the graph, but pay for each copy.  In particular, if $c_x = 1$
for all terminals $x$ and~$y$, then we obtain the Steiner tree
problem; if $c_x = 2$ for all terminals $x$ and~$y$, then we obtain
the minimum-cost $2$-edge-connected multi-subgraph problem.

Our framework yields the first $\PTAS$ for all of these problems in
bounded-genus graphs.  These $\PTAS$es are efficient, running in
$\Oof(f(\epsilon,g) \, n + h(g) \, n \log n) = \Oof_{\epsilon,g}(n
\log n)$ time for graphs embedded on orientable surfaces and
nonorientable surfaces.  (We usually omit the mention of
$f(\epsilon,g)$ and $h(g)$ by assuming $\epsilon$ and $g$ are
constant, but we later bound $f(\epsilon,g)$ as singly exponential in
a polynomial in~$1/\epsilon$ and $g$ and $h(g)$ as singly exponential
in $g$.)  In contrast, the problems we consider are $\APX$-complete
(and constant-factor-approximable) for general graphs.

We build upon the recent $\PTAS$ framework of Borradaile, Klein, and
Mathieu~\cite{BorradaileKM09} for subset-connectivity problems on
planar graphs.  In fact, our result is strictly more general: any
problem to which the previous planar-graph framework applies
automatically works in our framework as well, resulting in a $\PTAS$ for
bounded-genus graphs.  For example, Borradaile and
Klein~\cite{BorradaileKlein08} have recently given a $\PTAS$ for the
$\{0,1,2\}$-edge-connectivity \survive problem
using the planar framework.  This will imply a similar result in
bounded genus graphs.  In contrast to the planar-graph framework, our
$\PTAS$es have the attractive feature that they run correctly on all
graphs with the performance degrading with the genus.

Our techniques for attacking bounded-genus graphs include two recent
results: decompositions into bounded-treewidth graphs via
contractions~\cite{DemaineHM07} and fast algorithms for finding the
shortest noncontractible cycle~\cite{CabelloChambers07}.  We also use
a simplified version of an algorithm for finding a short sequence of
loops on a topological surface~\cite{EricksonWhittlesey05}, and
sophisticated dynamic programming. Our aim is to prove the following theorem:

\begin{theorem}\label{thm:main_ptas} 
  There exists a $\PTAS$ for the \steiner, \subtsp, and
  $\set{0,1,2}$-edge-connected \survive problems in edge-weighted graphs of
  genus $g$ with running time $\Oof(2^{\poly(\epsilon^{-1},g)}n +
  2^{\poly(g)}n \log n)$.
\end{theorem}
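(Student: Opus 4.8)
The plan is to lift the planar $\PTAS$ framework of Borradaile, Klein, and Mathieu to bounded-genus surfaces by a sequence of reductions that ultimately exposes a planar (or bounded-treewidth) structure on which their machinery, or a dynamic-programming analogue of it, can be run. Concretely, I would proceed in four stages. \textbf{First}, a $\preprocess$ step: using the standard spanner/mortar-graph reductions, reduce to an instance where the optimum is within a constant factor of the total edge weight and where the graph has structural niceness (bounded degree, light spanning structure), so that subsequent cutting operations lose only an $\epsilon$-fraction of $\OPT$. \textbf{Second}, a $\planarize$ step: repeatedly find a shortest noncontractible cycle using the algorithm of Cabello and Chambers~\cite{CabelloChambers07}, and ``cut'' the surface along it, decreasing the genus; iterating $\Oof(g)$ times yields a planar graph. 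The key point is to argue that one can afford to duplicate the cut cycles (paying for them) because, after preprocessing, each shortest noncontractible cycle is short relative to $\OPT$ — here one invokes the Erickson--Whittlesey~\cite{EricksonWhittlesey05} short-loop-sequence idea to bound the total length of all cut curves by $\Oof(g)$ times a quantity comparable to $\epsilon \cdot \OPT$ (after an appropriate ``thinning'' that removes heavy handles). Adding the terminals on the cut curves to the terminal set, a near-optimal solution in the surface maps to a near-optimal solution in the planar cut-open graph and vice versa.

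\textbf{Third}, apply the planar framework of~\cite{BorradaileKM09} to the resulting planar instance: build the mortar graph, the brick decomposition, select portals, and invoke the structure theorem guaranteeing a near-optimal solution that crosses each brick boundary $\Oof(1)$ times at portals, hence lies in a graph of bounded branchwidth. \textbf{Fourth}, solve the bounded-treewidth instance by dynamic programming. This is where the genus re-enters: rather than cutting all the way to a planar graph and losing control, the cleaner route (and the one matching the $\Oof(n\log n)$ target) is to combine the contraction decomposition of Demaine, Hajiaghayi, and Mohar~\cite{DemaineHM07} — which partitions the edges into $\Oof(1/\epsilon)$ classes so that contracting any one class yields a graph of treewidth $\Oof_{\epsilon,g}(1)$ — with the portal-based sparsification, so that the final DP runs on a bounded-treewidth host graph that carries the topological (homotopy/homology) information needed to certify connectivity on the original surface. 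The dynamic program must therefore track not just connectivity patterns across a separator but also the $\Oof(g)$-dimensional cycle-space data recording how partial solutions wind around handles; this is the genuinely new ingredient over the planar DP.

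The main obstacle I expect is precisely this interplay in stages two and four: making the cutting/thinning argument lose only $(1+\epsilon)$ in cost while keeping the number and total weight of cut curves under control (the ``thinning'' must remove exactly the handles that are expensive to cut, and one must verify that what remains is genuinely planar and that contracting/cutting preserves the existence of a near-optimal terminal-spanning substructure), and then augmenting the portal-and-brick dynamic program with genus bookkeeping without blowing up the state space beyond $2^{\poly(\epsilon^{-1},g)}$. Routine but necessary: verifying that every reduction is computable in $\Oof(h(g)\,n\log n)$ time (the shortest-noncontractible-cycle computations dominate) and that the additive $\epsilon$-losses compose to a single $(1+\epsilon)$ guarantee after rescaling $\epsilon$. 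Once the planar structure theorem is in hand and the genus-aware DP is set up, assembling Theorem~\ref{thm:main_ptas} is bookkeeping: plug each of the three problems' structure theorems into the common pipeline and read off the running time $\Oof(2^{\poly(\epsilon^{-1},g)}n + 2^{\poly(g)}n\log n)$.
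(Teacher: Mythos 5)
There is a genuine gap, and it sits exactly where you flag your ``main obstacle'': your planarization step is not the paper's and would fail as described. You want to cut along shortest noncontractible cycles whose total length is comparable to $\epsilon\cdot\OPT$ (after ``removing heavy handles''), duplicate them, add their vertices as terminals, and claim near-optimal solutions transfer between the surface and the cut-open planar graph. None of this can be made to work: a shortest noncontractible cycle can have length $\Omega(\OPT)$, ``removing the handles that are expensive to cut'' is not a cost-preserving operation (the optimum may use those handles), and promoting the cut-curve vertices to terminals changes the problem, inflating the optimum of the planar instance by up to $\Theta(g\cdot\OPT)$. The paper avoids paying for the cut at all: after a preprocessing step that guarantees every edge is within $4\mu\OPT$ of a $2$-approximate Steiner tree $T_0$, it builds a cut graph from $T_0$ plus $g$ loops of a tree-cotree decomposition (Lemma~\ref{lem:cut-graph}), of total length $O(g\mu\OPT)$ --- \emph{not} $\epsilon\OPT$ --- and this cut graph is used only as the boundary face $\sigma$ for invoking the planar mortar-graph construction (Theorem~\ref{thm:planar-mg}). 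Its length never enters the cost of the output solution; it only inflates the parameters $\alpha$ and $\kappa$ of the mortar graph, hence the number of portals. The terminal set is unchanged, the bricks are planar disks on the original surface, and that is why the planar Structure Theorem carries over verbatim.

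The second gap is your stage four: you assert the dynamic program must track $\Oof(g)$-dimensional cycle-space data recording how partial solutions wind around handles, calling this the genuinely new ingredient. It is neither needed nor justified. Steiner tree, $\{0,1,2\}$-edge-connectivity, and subset TSP are closed under the contraction-plus-lifting scheme: the paper applies the contraction decomposition of Demaine et al.\ to the brick-contracted graph $\BT(\MG)$ with portal edges protected by artificial weights, contracts a set $S^\star$ of length at most $\epsilon\OPT$, runs a completely standard bounded-treewidth DP (whose leaf bags solve planar subproblems inside bricks via Erickson--Monma--Veinott or the planar $2$-EC routine), and then lifts by adding back at most one or two copies of each edge of $S^\star$. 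No homology or winding bookkeeping appears anywhere, and you give no argument that such bookkeeping would be well-defined over a tree decomposition of the contracted graph or would keep the table size within $2^{\poly(\epsilon^{-1},g)}$. Finally, note that the singly exponential running time in the statement comes specifically from this DP-over-bricks route (Lemma~\ref{lem:thinning} and Section~\ref{sec:dp}); the spanner-plus-Klein-framework route you gesture at in stage three is the paper's \emph{other} method and is doubly exponential in $\poly(\epsilon^{-1},g)$, so it cannot by itself yield the claimed bound.
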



\section{Preliminaries}

All graphs $G = (V,E)$ have $n$ vertices, $m$ edges and are undirected
with edge lengths (weights).  The length of an edge $e$, subgraph $H$, and set
of subgraphs $\HHH$ are denoted $\length{e}$, $\length{H}$ and
$\length{{\HHH}}$, respectively.  The shortest distance between
vertices $x$ and $y$ in a graph $G$ is denoted $\dist_G(x,y)$.  The
boundary of a graph $G$ embedded in the plane is denoted by $\partial
G$. For an edge $e=uv$, we define the operation of \emph{contracting}
$e$ as identifying $u$ and $v$ and removing all loops and duplicate edges.

We use the basic terminology for embeddings as outlined
in~\cite{MoharThomassen01}. In this paper, an embedding refers to a
\emph{2-cell embedding}, i.e. a drawing of the vertices and faces of
the graph as points and arcs on a surface such that every face is
homeomorphic to an open disc. Such an embedding can be described
purely combinatorially by specifying a \emph{rotation system}, for the
cyclic ordering of edges around vertices of the graph, and a
\emph{signature} for each edge of the graph; we use this notion of a
\emph{combinatorial embedding}. A combinatorial embedding of a graph
$G$ naturally induces such a 2-cell embedding on each subgraph of
$G$. We only consider compact surfaces without boundary. When we refer
to a planar embedding, we actually mean an embedding in the 2-sphere.
If a surface contains a subset homeomorphic to a M\"obius strip, it is
\emph{nonorientable}; otherwise it is \emph{orientable}. For a 2-cell
embedded graph $G$ with $f$ facial walks, the number $g = 2 + m - n -
f$ is called the Euler genus of the surface. The Euler genus is equal
to twice the usual genus for orientable surfaces and equals the usual
genus for nonorientable surfaces. The \emph{dual} of an embedded
graph $G$ is defined as having the set of faces of $G$ as its vertex
set and having an edge between two vertices if the corresponding faces
of $G$ are adjacent.  We denote the dual graph by $G^\star$ and
identify each edge of $G$ with its corresponding edge in $G^\star$.  A
cycle of an embedded graph is \emph{contractible} if it can be
continuously deformed to a point; otherwise it is
\emph{noncontractible}. The operation of \emph{cutting along a
  2-sided cycle} $C$ is essentially: partition the edges adjacent to
$C$ into left and right edges and replace $C$ with two copies $C_\ell$
and $C_r$, adjacent to the left or right edges, accordingly. The
inside of these new cycles is ``patched'' with two new faces. If the
resulting graph is disconnected, the cycle is called
\emph{separating}, otherwise \emph{nonseparating}.  Cutting along a
1-sided cycle $C$ on nonorientable surfaces is defined similarly,
only that $C$ is replaced by one bigger cycle $C'$ that contains every
edge of $C$ exactly twice.  See~\cite[pages
  105--106]{MoharThomassen01} for further technical details.

Next we define the notions related to treewidth as introduced by
Robertson and Seymour~\cite{GraphMinors02}. A \emph{tree
  decomposition} of a graph $G$ is a pair $(T,\chi)$, where
$T=(I,F)$ is a tree and $\chi=\set{\chi_i | i \in I}$ is a family of
subsets of $V(G)$, called \emph{bags}, such that
\begin{enumerate} 
\item every vertex of $G$ appears in some bag of $\chi$;
\item for every edge $e=uv$ of $G$, there exists a bag that contains
  both $u$ and $v$;
\item for every vertex $v$ of $G$, the set of bags that contain $v$
  form a connected subtree $T_v$ of $T$.
\end{enumerate}
The \emph{width} of a tree decomposition is the maximum size of a bag
in $\chi$ minus $1$. The \emph{treewidth} of a graph $G$, denoted by
$\tw(G)$, is the minimum width over all possible tree decompositions
of $G$. 

The input graph is $G_0=(V_0,E_0)$ and has genus $g_0$; the
terminal set is $Q$. We assume $G_0$ is equipped with a combinatorial
embedding; such an embedding can be found in linear time, if the genus
is known to be fixed, see~\cite{Mohar99}. Let $\PPP$ be the considered
subset-connectivity problem. In Section~\ref{sec:preprocess}, we show
how to find a subgraph $G=(V,E)$ of $G_0$, so that for $0 \leq
\epsilon \leq 1$ any $(1+\epsilon)$-approximate solution of $\PPP$ in
$G_0$ also exists in $G$. Hence, we may use $G$ instead of $G_0$
in the rest of the paper. Note that as a subgraph of $G_0$, $G$ is
automatically equipped with a combinatorial embedding.

Let $\OPT$ denote the length of a optimal Steiner tree spanning
terminals~$Q$. We define $\OPT_{\PPP}$ to be the length of an
optimal solution to problem $\PPP$.  For the problems that we
solve, we require that $\OPT_{\PPP} = \Theta(\OPT)$ and in
particular that $\OPT \leq \OPT_{\PPP} \leq \mu \OPT$.  The
constant $\mu$ will be used in Section~\ref{sec:genus-mg} and is equal
to $2$ for both the subset TSP and $\{0,1,2\}$-edge-connectivity
problems.  This requirement is also needed for the planar case;
see~\cite{BorradaileKlein08}.  Because $\OPT_{\PPP} \geq \OPT$, upper bounds
in terms of $\OPT$ hold for all the problems herein.  As a result, we
can safely drop the $\PPP$ subscript throughout the paper.

We show how to obtain a $(1+c\epsilon)\OPT_{\PPP}$ solution for
a fixed constant $c$.  To obtain a $(1+\epsilon)\OPT_{\PPP}$
solution, we can simply use $\epsilon' = \epsilon/c$ as input to the
algorithm.


\section{Mortar Graph and Structure Theorem}
\label{sec:mg}

In~\cite{BorradaileKM09}, Borradaile, Klein
and Mathieu developed a $\PTAS$ for the Steiner tree problem in planar
graphs.  The method involves finding a grid-like subgraph called the
{\em mortar graph} that spans the input terminals and has length
$\Oof(\OPT)$.  The set of feasible Steiner trees is restricted to those
that cross between adjacent faces of the mortar graph only at a small
number (per face of the mortar graph) of pre-designated vertices
called {\em portals}.  A Structure Theorem guarantees the existence of
a nearly optimal solution (one that has length at most
$(1+\epsilon)\OPT$) in this set.  We review the details that are
relevant to this work and generalize to genus-$g$ graphs.

Here we define the mortar graph in such a way that generalizes to
higher genus graphs.  A path $P$ in a graph $G$ is {\em
  $\epsilon$-short in $G$} if for every pair of vertices $x$ and $y$
on $P$, the distance from $x$ to $y$ along $P$ is at most
$(1+\epsilon)$ times the distance from $x$ to $y$ in $G$:
$\dist_P(x,y) \leq (1+\epsilon)\dist_G(x,y)$.  Given a graph $G$
embedded on a surface and a set of terminals $Q$, a mortar graph is a
subgraph of $G$ with the following properties:

\begin{definition}[Mortar Graph and Bricks]\label{def:mg}
  Given a graph $G$ embedded on a surface of genus $g$, a set of
  terminals $Q$, and a number $0 < \epsilon \leq 1$, consider a
  subgraph $\MG := \MG(G,Q,\epsilon)$ of $G$ spanning $Q$ such that
  each facial walk of $\MG$ encloses an area homeomorphic to an open
  disk. For each face $F$ of $\MG$, we construct a \emph{brick} $B$ of
  $G$ by cutting $G$ along the facial walk $\partial F$; $B$ is the
  subgraph of $G$ embedded inside the face, including $\partial F$. We
  denote this facial walk as the \emph{mortar boundary} $\partial B$
  of $B$. We define the \emph{interior} of $B$ as $B$ without the
  edges of $\partial B$. We call $\MG$ a \emph{mortar graph} if for
  some constants $\alpha(\epsilon,g)$ and $\kappa(\epsilon,g)$ (to be
  defined later), we have $\ell(\MG) \leq \alpha \OPT$ and every brick $B$
  satisfies the following properties:
  \begin{enumerate} 
  \item $B$ is planar.
  \item The boundary of $B$ is the union of four paths in the
    clockwise order
    $W$, $N$, $E$, $S$.
  \item Every terminal of $Q$ that is in $B$ is on $N$ or on $S$.
  \item $N$ is 0-short in $B$, and every proper subpath of $S$ is
    $\epsilon$-short in $B$.
  \item There exists a number $k \leq \kappa$ and vertices $s_0, s_1,
    s_2, \ldots, s_k$ ordered from left to right along $S$ such
    that, for any vertex $x$ of $S[s_i,s_{i+1})$, the distance from
    $x$ to $s_i$ along $S$ is less than $\epsilon$ times the
    distance from $x$ to $N$ in $B$: $\dist_{S}(x,s_i) <
    \epsilon\cdot \dist_{B}(x,N)$.
  \end{enumerate}  
\end{definition}

\begin{figure}
  \centering
  \subfigure[]{\includegraphics[scale=0.9]{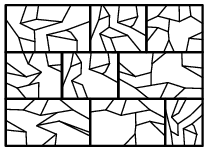}}\hfil\hfil
  \subfigure[]{\includegraphics[scale=0.9]{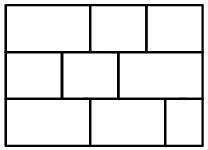}}\hfil\hfil
  \subfigure[]{\includegraphics[scale=0.9]{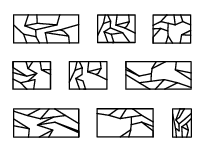}}\hfil\hfil
  \subfigure[]{\includegraphics[scale=0.9]{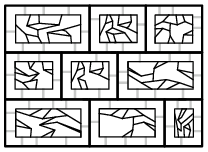}}\hfil\hfil
  \subfigure[]{\includegraphics[scale=0.9]{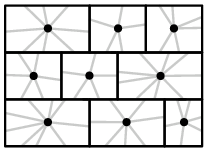}}
  \caption{(a)~An input graph $G$ with mortar graph $\MG$ given by bold
    edges in (b). (c)~The set of bricks corresponding to $\MG$ (d) A
    portal-connected graph, $\BC(\MG,\theta)$. The portal edges are
    grey.  (e)~$\BC(\MG,\theta)$ with the bricks contracted, resulting in
    $\BT(\MG,\theta)$.  The dark vertices are brick vertices. }
  \label{fig:mg} 
\end{figure} 

The mortar graph and the set of bricks are illustrated in
Figures~\ref{fig:mg}~(a),~(b) and~(c).  Constructing the mortar graph
for planar graphs first involves finding a 2-approximate Steiner tree
$T$~\cite{Mehlhorn88} and cutting open the graph along $T$ creating a new face $H$ and then: 

\begin{enumerate} 
\item Finding shortest paths between certain vertices of $H$.  These paths
  result in the $N$ and $S$ boundaries of the bricks.\label{mg:step-3}
\item Finding shortest paths between vertices of the paths found in
  Step~\ref{mg:step-3}.  These paths are called {\em columns}, do not
  cross each other, and have a natural order. \label{mg:step-4} 
\item
  Taking every $\kappa$th path found in Step~\ref{mg:step-4}.  These
  paths are called {\em supercolumns} and form the $E$ and $W$
  boundaries of the bricks. We sometimes refer to $\kappa$ as the
  \emph{spacing} of the supercolumns.
  \label{mg:step-5} 
\end{enumerate} 

The mortar graph is composed of the edges of $T$ (equivalently, $H$)
and the edges found in Steps~\ref{mg:step-3} and~\ref{mg:step-5}.
In~\cite{BorradaileKM09}, it is shown that the total length of the
mortar graph edges is at most $9\epsilon^{-1} \OPT$. For the purposes of this paper,
we bound the length of the mortar graph in terms of $\length{H}$.  The
following theorem can be easily deduced from~\cite{Klein06}
and~\cite{BorradaileKM09}:

\begin{theorem}[\cite{Klein06,BorradaileKM09}] \label{thm:planar-mg}
  Let $0 < \epsilon \leq 1$ and $G$ be a planar graph with outer face
  $H$ containing the terminals $Q$ and such that
  $\ell(H) \leq \alpha_0 \OPT$, for some constant
  $\alpha_0$. For $\alpha=(2\alpha_0+1)\epsilon^{-1}$, there is a
  mortar graph $\MG(G,Q,\epsilon)$ containing $H$ whose
  length is at most $\alpha \OPT$ and whose supercolumns have length
  at most $\epsilon \OPT$ with spacing $\kappa=\alpha_0
  \epsilon^{-2}(1+\epsilon^{-1})$.  The mortar graph can be
  found in $\Oof(n \log n)$ time.
\end{theorem}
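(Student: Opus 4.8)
The construction is exactly that of~\cite{Klein06,BorradaileKM09}; the only change is that we feed it the given outer face $H$ in place of the boundary walk of a doubled $2$-approximate Steiner tree, so the proof amounts to re-running their analysis while carrying $\ell(H)$ as a parameter instead of the fixed bound $4\OPT$ used there. I would proceed in the three steps of Section~\ref{sec:mg}.

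\textbf{The $N$ and $S$ paths.} Apply the $\epsilon$-short path decomposition of~\cite{Klein06} to the boundary walk of $H$: in $\Oof(\epsilon^{-1})$ rounds, replace subpaths of the current boundary by $\epsilon$-short paths of $G$ between their endpoints. As in~\cite{BorradaileKM09}, the resulting paths become the $N$- and $S$-boundaries of the bricks, each terminal of $Q$ lying on a brick boundary lands on its $N$ or $S$ path (property~3), the $N$ path is a shortest path hence $0$-short while proper subpaths of $S$ are $\epsilon$-short (property~4), and the cut points $s_0,\dots,s_k$ certifying property~5 are placed greedily from left to right along $S$, which bounds $k$ by the claimed $\kappa$. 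The length bound of~\cite{Klein06} on the decomposition instantiates to: these paths have total length at most $\epsilon^{-1}\ell(H)\le \alpha_0\epsilon^{-1}\OPT$.

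\textbf{Columns and supercolumns.} Between the $N$ and $S$ path of each strip insert the non-crossing columns of Step~\ref{mg:step-4}. Charging each column against the portion of its strip's $N$ path lying between the two neighbouring columns, and using $0$-shortness of $N$ together with the defining property of columns, bounds the total length of all columns by $(1+\epsilon^{-1})$ times the total $N$-path length, hence by $(1+\epsilon^{-1})\alpha_0\epsilon^{-1}\OPT$. Averaging over the $\kappa$ offsets $r\in\{0,\dots,\kappa-1\}$ for ``every $\kappa$-th column'', some offset yields supercolumns of total length at most $\kappa^{-1}$ times that; with $\kappa=\alpha_0\epsilon^{-2}(1+\epsilon^{-1})$ this equals $\epsilon\OPT$, so in particular each supercolumn has length at most $\epsilon\OPT$. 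These supercolumns form the $E$ and $W$ brick boundaries (property~2), and property~1 holds because each brick is a subgraph of the planar graph $G$.

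\textbf{Assembly, running time, and the obstacle.} Set $\MG=H\cup(N/S\text{ paths})\cup(\text{supercolumns})$; it contains $H$, spans $Q$ since $Q\subseteq H$, and its faces are the bricks, each a disk because $\MG$ is a planar subgraph drawn with $H$ as outer face. Summing the three contributions and using $\epsilon\le1$,
\[
  \ell(\MG)\ \le\ \ell(H)+\alpha_0\epsilon^{-1}\OPT+\epsilon\OPT\ \le\ \alpha_0\epsilon^{-1}\OPT+\alpha_0\epsilon^{-1}\OPT+\epsilon^{-1}\OPT\ =\ (2\alpha_0+1)\epsilon^{-1}\OPT\ =\ \alpha\OPT .
\]
The $\epsilon$-short decomposition, the column computation, and the offset selection are all carried out by the planar shortest-path machinery of~\cite{Klein06,BorradaileKM09} in $\Oof(n\log n)$ time. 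The one place needing genuine care — and the step I expect to consume the most attention, though it is purely mechanical — is verifying that every length estimate in the cited analyses depends on the outer face only through its length (it enters solely as the seed of the $\epsilon$-short decomposition and as the object the columns are charged against), so that substituting $\ell(H)\le\alpha_0\OPT$ is legitimate, and checking that the single additive $\OPT$ term, namely the supercolumn budget $\epsilon\OPT$, is absorbed into $\epsilon^{-1}\OPT$ exactly as above.
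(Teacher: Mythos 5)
Your proposal is correct and matches the paper's treatment: the paper offers no proof of Theorem~\ref{thm:planar-mg}, stating only that it is ``easily deduced'' from~\cite{Klein06,BorradaileKM09} via the three-step construction (N/S paths, columns, every-$\kappa$th-column supercolumns), and your argument is exactly that deduction, re-running the cited analysis with $\ell(H)\le\alpha_0\OPT$ carried as a parameter in place of the $4\OPT$ boundary bound (note $\alpha_0=4$ recovers the $9\epsilon^{-1}\OPT$ bound quoted in Section~\ref{sec:mg}, and your $\kappa=\alpha_0\epsilon^{-2}(1+\epsilon^{-1})$ likewise specializes correctly). The accounting $\ell(H)+\alpha_0\epsilon^{-1}\OPT+\epsilon\OPT\le(2\alpha_0+1)\epsilon^{-1}\OPT$ and the offset-averaging for the supercolumn budget are consistent with the constants in the statement, so no gap beyond the mechanical verification you already flag.
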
 

\subsection{A mortar graph for bounded-genus graphs: Overview}\label{sec:genus-mg}

We use Theorem~\ref{thm:planar-mg} to prove the existence of a mortar graph for
genus-$g$ embedded graphs. This section is devoted to
proving the following theorem: 

\begin{theorem}
  \label{thm:genus-mortar-graph} Let an embedded edge-weighted
  graph $G$ of Euler genus $g$, a subset of its vertices $Q$, an $0 <
  \epsilon \leq 1$, and $\mu \geq 1$ be given. For $\alpha = (32\mu g
  + 9)\epsilon^{-1}$, there is a mortar graph $\MG(G,Q,\epsilon)$ of
  $G$ such that the length of $\MG$ is $\leq \alpha \OPT$ and the
  supercolumns of $\MG$ have length $\leq \epsilon \OPT$ with spacing
  $\kappa = (16\mu g + 4)\epsilon^{-2}(1+\epsilon^{-1})$. The mortar
  graph can be found in $\Oof(n \log n)$ time.
\end{theorem}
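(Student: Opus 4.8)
The plan is to cut the genus-$g$ instance open into a planar one with a short outer face, invoke the planar construction of Theorem~\ref{thm:planar-mg}, and then re-glue. First I would compute a $2$-approximate Steiner tree $T$ of $Q$ in $\Oof(n\log n)$ time~\cite{Mehlhorn88}, so that $\OPT\le\ell(T)\le 2\OPT$. Two reductions then make the genus step quantitative. (i)~Delete every edge longer than $2\mu\,\ell(T)$: a solution of $\PPP$ of length at most $(1+\epsilon)\OPT_{\PPP}\le 2\mu\OPT\le 2\mu\,\ell(T)$ cannot use such an edge, so all near-optimal solutions, and in particular the optima, are retained. (ii)~Replace $G$ by the subgraph $G'$ induced by all vertices within $G$-distance $2\mu\OPT$ of $V(T)$: every solution of $\PPP$ of length at most $(1+\epsilon)\OPT_{\PPP}$ has all of its vertices within distance $2\mu\OPT$ of $Q\subseteq V(T)$ and therefore lies inside $G'$, which moreover contains $T$, is connected, and has $V(T)$ of eccentricity at most $2\mu\OPT$. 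So it suffices to build the mortar graph inside $G'$, which carries the induced embedding of some Euler genus $g'\le g$.

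The heart of the argument is to planarize $G'$ by a short topological cut. I would contract $T$ to a single vertex $r$ and run the simplified short-loop-sequence construction of Erickson and Whittlesey~\cite{EricksonWhittlesey05} on $G'/T$ --- using the multiple-source shortest-path machinery of Cabello and Chambers~\cite{CabelloChambers07} to keep the running time $\Oof(n\log n)$ --- to obtain $k\le g'$ loops based at $r$ whose removal turns the surface into a single open disk. Pulled back to $G'$, these are cycles $C_1,\dots,C_k$, each of the form ``shortest path into $V(T)$'' $+$ ``one edge'' $+$ ``shortest path out of $V(T)$''; by (i) and (ii) each shortest-path piece has length at most $2\mu\OPT$ and the middle edge length at most $2\mu\,\ell(T)\le 4\mu\OPT$, so $\ell(C_i)\le 8\mu\OPT$. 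Now $X:=T\cup C_1\cup\dots\cup C_k$ is a cut graph of the surface: cutting $G'$ along $X$ produces a planar graph $\bar G$ with a single new face, which I designate the outer face $H$, and since $Q\subseteq V(T)\subseteq X$ all terminals lie on its boundary. That boundary walk traverses each edge of $X$ twice (a one-sided cycle on a nonorientable surface being duplicated, which the estimate already absorbs), so $\ell(H)=2\,\ell(X)\le 2\bigl(\ell(T)+\sum_{i=1}^{k}\ell(C_i)\bigr)\le 2(2\OPT+8\mu g\OPT)=(16\mu g+4)\OPT=:\alpha_0\OPT$. As $\bar G$ also contains a Steiner tree of length at most $\ell(H)\le\alpha_0\OPT$, Theorem~\ref{thm:planar-mg} applies to $(\bar G,H,Q)$ with this $\alpha_0$ (and with $\OPT$ still the optimum of the original instance), giving in $\Oof(n\log n)$ time a planar mortar graph $\MG'$ of $\bar G$ that contains $H$, has $\ell(\MG')\le(2\alpha_0+1)\epsilon^{-1}\OPT=(32\mu g+9)\epsilon^{-1}\OPT$, supercolumns of length at most $\epsilon\OPT$, and supercolumn spacing $\kappa=\alpha_0\epsilon^{-2}(1+\epsilon^{-1})=(16\mu g+4)\epsilon^{-2}(1+\epsilon^{-1})$.

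Finally I would re-glue the cut, identifying the two copies of each edge of $X$, and set $\MG$ to be the image of $\MG'$; identification only merges edges, so the bound on $\ell(\MG)$, the supercolumn length bound, and the spacing $\kappa$ all carry over, and $\MG$ still spans $Q$. The bounded faces of $\MG'$ are its bricks; the gluing takes place entirely along the boundary of $H$, so each brick survives verbatim as a planar subgraph still meeting properties 1--5 of Definition~\ref{def:mg}, while the outer disk $H$, with its boundary identified to itself in pairs, closes up into a single face of $\MG$ that is again homeomorphic to an open disk on the genus-$g$ surface --- so all facial walks of $\MG$ bound disks. Therefore $\MG$ is a mortar graph of $G$ with the parameters asserted in Theorem~\ref{thm:genus-mortar-graph}, found in $\Oof(n\log n)$ time. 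I expect the genus-reduction step to be the crux: in an arbitrary bounded-genus graph the shortest noncontractible cycle can be far longer than $\OPT$, so the restriction to the region reachable by near-optimal solutions is exactly what forces the cut cycles to be short; one must additionally verify that cutting along $X$ respects the combinatorial embedding, that the value $\OPT$ and the near-optimal solutions transfer correctly to $\bar G$, and --- the subtlest point --- that no brick is ``tunneled through'' when the copies of $X$ are re-identified, so that every brick really is planar.
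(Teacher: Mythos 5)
Your proposal is correct and follows essentially the same route as the paper: preprocess relative to a $2$-approximate Steiner tree, form a cut graph from the tree plus at most $g$ loops of length at most $8\mu\OPT$ each (the paper uses Eppstein's tree--cotree decomposition rather than the Erickson--Whittlesey greedy loops, since only a length bound, not optimality, is needed), cut open to get a planar graph whose boundary has length at most $(16\mu g+4)\OPT$ and spans $Q$, apply Theorem~\ref{thm:planar-mg} with $\alpha_0=16\mu g+4$, and pull the mortar graph back to the surface. Two small points to tidy: the ball of radius $2\mu\OPT$ is not computable --- the paper instead deletes vertices and edges whose distance (with the edge's own length included) from the contracted tree exceeds $2\mu\ell(T_0)$, which is what actually delivers the $8\mu\OPT$ per-loop bound and hence the stated constants --- and in the re-gluing the patch face bounded by $\sigma$ simply disappears rather than ``closing up'' into a disk face; every face of the pulled-back mortar graph is the image of a bounded face of the planar one, which is why all facial walks bound disks.
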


Let $G_0=(V_0,E_0)$ be the input graph of genus $g_0$ and $Q$ be the
terminal set. In a first preprocessing step, we delete a number of
unnecessary vertices and edges of $G_0$ to obtain a graph $G=(V,E)$ of
genus $g \leq g_0$ that still contains every
$(1+\epsilon)$-approximate solution for terminal set $Q$ for all $0
\leq \epsilon \leq 1$ while fulfilling certain bounds on the length of
shortest paths. In the next step, we find a \emph{cut graph} $\CG$ of
$G$ that contains all terminals and whose length is bounded by a
constant times $\OPT$. We cut the graph open along $\CG$, so that it
becomes a planar graph with a simple cycle $\sigma$ as boundary, where
the length of $\sigma$ is twice that of $\CG$. See
Figure~\ref{fig:cutgraph} for an illustration. Afterwards, the
remaining steps of building the mortar graph can be the same as in the
planar case, by way of Theorem~\ref{thm:planar-mg}.

For an edge $e=vw$ in $G_0$, we let 
$$ \dist_{G_0}(r,e) = \min \set{\dist_{G_0}(r,v),\dist_{G_0}(r,w)} + \length{e} $$ 
and say that $e$
is \emph{at distance} $\dist_{G_0}(r,e)$ from $r$. If the root vertex
represents a contracted graph $H$, we use the same terminology with
respect to $H$.

\subsection{Preprocessing the input graph} \label{sec:preprocess}

Our first step is to apply the following preprocessing procedure:

\begin{center} \fbox{
    \begin{minipage}[h]{0.99\linewidth}
      \noindent\textbf{Algorithm $\preprocess(G_0,Q,\mu)$. } \\
      \begin{tabular}{ll}
        \textit{Input. } & an arbitrary graph $G_0$, terminals $Q \subseteq \vrt{G_0}$, a constant $\mu$\\
        \textit{Output. } & a preprocessed subgraph of $G_0$\\
      \end{tabular} \\[-2ex]
      \begin{enumerate} 
      \item Find a 2-approximate Steiner tree $T_0$ for $Q$ and contract it to a  vertex $r$.
      \item Find a shortest-path tree rooted at $r$.
      \item Delete all vertices $v$ and edges $e$ of $G_0$ with\\
        $\dist_{G_0}(r,v),\dist_{G_0}(r,e) > 2\mu\length{T_0}$.
      \end{enumerate}
    \end{minipage} }
\end{center}

\noindent Any deleted vertex or edge is at distance $> 2\mu
\length{T_0} > 2\mu \OPT$ from any terminal and hence can not be part
of a $(1+\epsilon)$-approximation for any $0 \leq \epsilon \leq 1$. We
call the resulting graph $G=(V,E)$ and henceforth use $G$ instead of
$G_0$ in our algorithm. The preprocessing step can be accomplished in
linear time: step~1 using M\"{u}ller-Hannemann and
Tazari's algorithm~\cite{TazariMuellerh09-DAM} and step~2 using Henzinger et al.'s
algorithm~\cite{HenzingerKRS97}. Trivially, we have
\begin{proposition}\label{prop:preprocess}
  All vertices and edges of $G$ are at distance at most $4\mu \OPT$ from $T_0$.
\end{proposition}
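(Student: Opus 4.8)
The plan is to unwind the definitions introduced just above the proposition; the statement is essentially bookkeeping, so the ``proof'' is a short definition chase. First I would pin down what ``distance from $T_0$'' means after Step~1 of $\preprocess$. Once $T_0$ is contracted to the vertex $r$, for a vertex $v$ the quantity $\dist_{G_0}(r,v)$ is, by the stated convention for contracted graphs, the distance in $G_0$ from $v$ to the nearest vertex of $T_0$: a shortest path realizing this can be taken to meet $\vrt{T_0}$ only at its endpoint, so it uses no edge interior to $T_0$ and survives the contraction with the same length, and conversely any $v$--$r$ path in the contracted graph expands to a $v$--$\vrt{T_0}$ walk of the same length. Similarly, for an edge $e=vw$ the prescribed value $\dist_{G_0}(r,e) = \min\{\dist_{G_0}(r,v),\dist_{G_0}(r,w)\} + \length{e}$ is by definition what it means for $e$ to be ``at distance $\dist_{G_0}(r,e)$ from $T_0$''.

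Next I would read off the deletion rule of Step~3: every vertex and every edge that is \emph{not} deleted --- i.e.\ every vertex and edge of $G$ --- has $\dist_{G_0}(r,\cdot) \leq 2\mu\length{T_0}$. It is worth checking in passing that $G=(V,E)$ is a genuine subgraph, i.e.\ that an edge of $G$ never has an endpoint outside $G$: if $e=vw$ survives then, taking $\dist_{G_0}(r,v)\leq\dist_{G_0}(r,w)$ without loss of generality, we get $\dist_{G_0}(r,v)\leq\dist_{G_0}(r,e)\leq 2\mu\length{T_0}$, and also $\dist_{G_0}(r,w)\leq\dist_{G_0}(r,v)+\length{e}=\dist_{G_0}(r,e)\leq 2\mu\length{T_0}$, so both $v$ and $w$ survive.

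Finally, I would invoke the only quantitative input: $T_0$ is a $2$-approximate Steiner tree for $Q$ (Step~1, via~\cite{TazariMuellerh09-DAM}), so $\length{T_0}\leq 2\OPT$. Combining, every vertex or edge of $G$ is at distance at most $2\mu\length{T_0}\leq 4\mu\OPT$ from $T_0$, which is the claim. There is no real obstacle here --- the statement is flagged ``trivially'' for good reason; the only two points needing a moment's attention are the identification of $\dist_{G_0}(r,\cdot)$ with distance to $T_0$ after the contraction, and the consistency of the edge- and vertex-deletion rules noted above, which together make $G$ a well-defined subgraph with the stated distance bound.
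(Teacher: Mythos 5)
Your proposal is correct and matches the paper's (implicit) argument: the paper simply notes the deletion threshold $2\mu\length{T_0}$ together with $\length{T_0}\leq 2\OPT$ and labels the proposition ``trivial.'' Your extra checks (identifying $\dist_{G_0}(r,\cdot)$ with distance to $T_0$ after contraction, and that surviving edges have surviving endpoints) are harmless elaborations of the same reasoning.
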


\subsection{Constructing a cut graph}

\begin{figure}
  \centering
  \includegraphics[height=3.5cm]{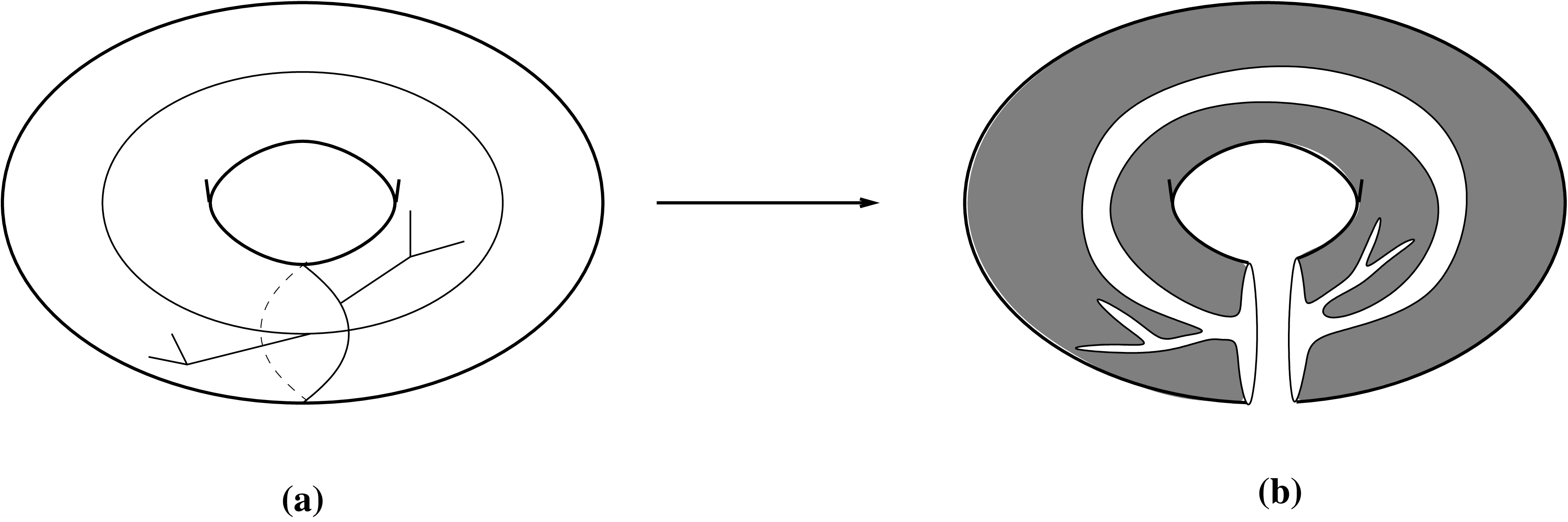}
  \caption{(a)~a cut graph of a tree drawn on a torus; (b) the result
    of cutting the surface open along the cut graph: the shaded area
    is homeomorphic to a disc and the white area is the additional
    face of the planarized surface.}
  \label{fig:cutgraph} 
\end{figure} 

A central fact that we use in this section and also in other parts of
our work is the following observation~\cite{EITTWY92}:

\begin{observation}\label{obs:planartree}
  Let $G$ be a planar graph and $T$ a spanning tree of $G$. Then the
  set of edges $\edgeG - \edgeT$ induces a spanning tree $T^\star$ in
  the dual $G^\star$. If $T$ is a minimum spanning tree of $G$, then
  $T^\star$ is a maximum spanning tree of $G^\star$.
\end{observation}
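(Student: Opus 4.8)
The plan is to prove the two assertions in turn, both resting on the classical planar-duality fact that the cycles of a connected plane graph $G$ are exactly the minimal edge cuts (bonds) of its dual $G^\star$, together with a bit of Euler's-formula bookkeeping; I will assume $G$ is connected (otherwise no spanning tree exists).

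First I would show that $E(G)-E(T)$, identified with the corresponding edge set of $G^\star$, is a spanning tree $T^\star$ of $G^\star$. If $G$ has $n$ vertices, $m$ edges and $f$ faces, then $n-m+f=2$, so $f-1 = m-n+1 = m-(n-1)$; since $G^\star$ has $f$ vertices and $m$ edges, $E(G)-E(T)$ has exactly $|V(G^\star)|-1$ edges. A subgraph of $G^\star$ that is acyclic and has $|V(G^\star)|-1$ edges is automatically a spanning tree, so it suffices to check acyclicity. If some cycle $C^\star$ of $G^\star$ were contained in $E(G)-E(T)$, then the primal edges dual to $C^\star$ would form a nonempty edge cut of $G$, i.e. they would separate $V(G)$ into two nonempty parts; but $T$ is connected and spanning, hence crosses every such partition, while $C^\star$ by construction avoids $E(T)$ entirely — a contradiction. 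Thus $E(G)-E(T)=E(T^\star)$ for a spanning tree $T^\star$ of $G^\star$.

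Second, for the weighted statement I would invoke the cut/cycle optimality conditions for spanning trees: $T$ is a minimum spanning tree of $G$ if and only if for every $e\in E(T)$ and every edge $f$ in the fundamental cut of $e$ with respect to $T$ one has $\length{e}\le\length{f}$. The key translation is that this fundamental cut — obtained by deleting $e$ from $T$, which splits $T$ into $T_1\sqcup T_2$, and collecting all $G$-edges between $V(T_1)$ and $V(T_2)$ — is a bond of $G$, hence a cycle of $G^\star$; moreover every edge of it other than $e$ lies in $E(G)-E(T)=E(T^\star)$, so it is precisely the fundamental cycle of $e$ (which is a non-tree edge of $T^\star$, since $e\in E(T)$) with respect to $T^\star$. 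Under the edge identification the condition above therefore reads: for every non-tree edge $e$ of $T^\star$ and every edge $f$ on its fundamental cycle with respect to $T^\star$, $\length{e}\le\length{f}$ — which is exactly the cycle-optimality condition characterizing $T^\star$ as a \emph{maximum} spanning tree of $G^\star$. This completes the argument.

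The main obstacle, and the one point I would treat with care (or simply cite, e.g. from the reference \cite{EITTWY92} already in use, or from \cite{MoharThomassen01}), is the duality dictionary itself: that a cycle of $G^\star$ corresponds to a minimal edge cut of $G$, and that fundamental cuts with respect to $T$ correspond edge-for-edge to fundamental cycles with respect to $T^\star$. This is a standard consequence of the embedding (via the Jordan curve theorem), and once it is in hand the remaining steps — the Euler-formula count and the standard minimum/maximum spanning tree exchange conditions — are routine. One could alternatively phrase the whole proof matroid-theoretically (the bond matroid of $G$ is the cycle matroid of $G^\star$ and the dual of the cycle matroid of $G$, so bases of $G^\star$ are complements of bases of $G$, and a maximum-weight basis of a matroid is the complement of a minimum-weight basis of its dual), but the elementary version above stays closer to the vocabulary used in the paper.
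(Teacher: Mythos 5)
Your proof is correct. Note, though, that the paper never proves this statement: it is stated as an Observation and attributed to the reference \cite{EITTWY92}, so there is no in-paper argument to compare against. Your write-up is the standard self-contained justification one would expect behind that citation: the Euler-formula count $m-(n-1)=f-1$ plus the cycle/bond duality (a cycle of $G^\star$ is an edge cut of $G$, which the spanning, connected $T$ must cross) gives that $\edgeG-\edgeT$ is acyclic with $|V(G^\star)|-1$ edges, hence a spanning tree; and the weighted claim follows by translating the cut-optimality condition for $T$ (each tree edge is minimum in its fundamental cut, which is a bond since both sides of $T-e$ induce connected subgraphs) into the cycle-optimality condition for $T^\star$ (each non-tree edge is minimum on its fundamental cycle), which characterizes a maximum spanning tree. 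The one ingredient you rightly flag as the load-bearing step --- the bond/cycle dictionary and the edge-for-edge correspondence between fundamental cuts of $T$ and fundamental cycles of $T^\star$ --- is exactly what the paper implicitly leans on via the citation, and your matroid-duality aside is an equally valid packaging of the same fact. So there is no gap; you have simply supplied the proof the paper outsources.
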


A similar lemma also holds for bounded-genus graphs: if $T$ is a
(minimum) spanning tree of $G$ and $T^\star$ a (maximum) spanning tree
of $G^\star - \edgeT$, then $T^\star$ is a (maximum) spanning tree of
$G^\star$ and the size of the set of remaining edges $X := \edgeG -
\edgeT -\edge{T^\star}$ is $g$, the Euler genus of $G$, by Euler's
formula. Eppstein~\cite{Eppstein03} defines such a triple
$(T,T^\star,X)$ as a \emph{tree-cotree decomposition} of $G$ and shows
that such a decomposition can be found in linear time for graphs on
both orientable and nonorientable surfaces.

In order to construct a cut graph, we start again with a
$2$-approximation $T_0$ and contract it to a vertex $r$. Next, we look
for a \emph{system of loops} rooted at $r$: iteratively find short
nonseparating cycles through $r$ and cut the graph open along each
cycle. Erickson and Whittlesey~\cite{EricksonWhittlesey05} showed
that, for orientable surfaces, taking the \emph{shortest} applicable
cycle at each step results in the shortest system of loops through
$r$. They suggest a linear-time algorithm using the
tree-cotree decomposition $(T,T^\star,X)$ of
Eppstein~\cite{Eppstein03}.  Eppstein showed:
\begin{lemma}[Lemma~2,~\cite{Eppstein03}]\label{lem:eppstein}
  Given a tree-cotree decomposition $(T,T^\star,X)$, the set of
  elementary cycles $\set{\mbox{loop}(T,e) : e \in X}$ is a cut graph
  of $G$ where $\mbox{loop}(T,e)$ is the closed walk formed by the
  paths in $T$ from $r$ to the endpoints of $e$ plus the edge~$e$.
\end{lemma}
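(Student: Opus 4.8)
The goal is to show that cutting the surface $S$ carrying the $2$-cell embedding of $G$ along $C := \bigcup_{e \in X} \operatorname{loop}(T,e)$ produces a single closed disk, which is exactly what it means for $C$ to be a cut graph. I would assume $g \geq 1$ (for $g = 0$ no cut graph is needed, and the statement is degenerate). The plan is to establish four facts: (i)~$C$ is connected; (ii)~the open subsurface $S \setminus C$ is connected; (iii)~$\chi(S \setminus C) = 1$; and (iv)~every connected compact surface with nonempty boundary and Euler characteristic $1$ is a disk. Granting these, the surface $S'$ obtained by cutting $S$ along $C$ is compact with nonempty boundary (since $C \neq \emptyset$), connected (its interior is homeomorphic to $S \setminus C$), and satisfies $\chi(S') = \chi(S \setminus C) = 1$, so $S'$ is a disk.

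Fact (i) is immediate: every closed walk $\operatorname{loop}(T,e)$ contains the root $r$, so the walks all lie in one component. It is also worth recording what $C$ looks like as an abstract subgraph. Each $\operatorname{loop}(T,e)$, for $e=uv\in X$, consists of $e$ together with the $T$-paths from $r$ to $u$ and to $v$; taking the union over $e\in X$, the edge set of $C$ is $X$ together with the edge set of the subtree $T'\subseteq T$ spanned by $r$ and all endpoints of edges of $X$, and $V(C)=V(T')$. Since $C$ is connected and contains exactly $g$ independent cycles --- one per edge of $X$, each completed through $T'$, using that both endpoints of every edge of $X$ already lie in $V(T')$ and recalling that $|X|=g$ --- we get $\chi(C) = 1 - g$.

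For Fact (ii) I would use the cotree $T^\star$. The three edge sets $E(T)$, $E(T^\star)$, $X$ partition $E(G)$, and $E(C) \subseteq E(T) \cup X$, so $E(T^\star) \cap E(C) = \emptyset$; hence every face of $G$ lies in $S \setminus C$, and any two faces are joined inside $S \setminus C$ by a path crossing only edges of $T^\star$, which is possible because $T^\star$ is a spanning tree of the dual $G^\star$. Thus $S \setminus C$ is connected. For Fact (iii) I would invoke additivity of the Euler characteristic for the decomposition of the compact surface $S$ into the closed subset $C$ and its open complement --- using that on the even-dimensional manifold $S \setminus C$ the compactly supported Euler characteristic coincides with the ordinary one --- obtaining $\chi(S \setminus C) = \chi(S) - \chi(C) = (2 - g) - (1 - g) = 1$, where $\chi(S) = n - m + f = 2 - g$ by the definition of the Euler genus. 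Fact (iv) is a direct reading of the classification of compact surfaces: among surfaces with nonempty boundary, only the disk has Euler characteristic $1$, since $2h+b=1$ and $k+b=1$ each have a unique solution with $b\geq 1$.

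I expect the one genuinely delicate point to be the bookkeeping that links the combinatorial operation ``cut $S$ along $C$'' to the topological statements above: one needs that cutting a $2$-cell embedded graph along an arbitrary connected subgraph, which may traverse one-sided edges and may split off several corners at a vertex, yields a compact surface with boundary whose interior is exactly $S \setminus C$. Taking that as known (it is standard; see~\cite{MoharThomassen01}), the argument is insensitive to orientability --- the Euler-characteristic step can be run with $\mathbb{Z}/2$ coefficients throughout. A more hands-on alternative to Facts (iii)--(iv) is to exhibit a deformation retraction of $S \setminus C$ onto $T^\star$, with pendant arcs contributed by the edges of $C \cap E(T)$, which is manifestly contractible so that $S \setminus C$ is an open disk; this variant instead needs the observation that a regular neighborhood of a tree in a surface, orientable or not, is a disk, because a tree contains no cycle along which the surface could twist.
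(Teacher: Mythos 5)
The paper does not actually prove this lemma: it is imported verbatim as Lemma~2 of Eppstein~\cite{Eppstein03}, so there is no in-paper argument to match yours against, and your proposal has to be judged on its own terms. Judged that way, it is correct, and it is a reasonable self-contained replacement for the citation. The load-bearing steps all check out: $E(C)\subseteq E(T)\cup X$ is disjoint from $E(T^\star)$ (the three sets partition $E(G)$), so the dual spanning tree connects all face interiors inside $S\setminus C$; the union of the loops is exactly the subtree $T'$ of $T$ spanned by $r$ and the endpoints of $X$ together with the $g$ edges of $X$, giving $\chi(C)=1-g$ and hence $\chi(S\setminus C)=(2-g)-(1-g)=1$; and among connected compact surfaces with nonempty boundary only the disk has Euler characteristic $1$, orientable or not. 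Treating $g=0$ as degenerate is also the right reading, since then $X=\emptyset$. Two small repairs are worth making. First, your connectivity argument only joins face interiors to one another; add the one-line observation that every point of $S\setminus C$ lying on an edge not in $C$ or at a vertex not in $V(C)$ reaches an adjacent face interior without meeting $C$ (the open faces are dense and avoid $C$), so every component of $S\setminus C$ contains a face and the dual-tree argument finishes. Second, in your closing ``hands-on'' alternative the pendant arcs of the spine should be contributed by the edges of $T$ \emph{outside} $C$, i.e.\ $E(T)\setminus E(C)$, not by $C\cap E(T)$ --- edges of $C$ are removed from the complement --- though since this is only an aside it does not affect the main argument. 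Compared with the paper's choice (pure citation), your route costs a little topology (compactly supported Euler characteristics, or equivalently $\chi(S')=\chi(S)-\chi(C)$ via a regular neighborhood, plus the standard fact from~\cite{MoharThomassen01} that cutting along an embedded subgraph yields a compact surface whose interior is $S\setminus C$), but it buys an argument that is visibly insensitive to orientability, which is precisely the setting this paper needs.
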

 Eppstein's decomposition also works for nonorientable
embeddings. As we only need to bound the length (as opposed to
minimizing the length) of our cut graph, we
present a simpler algorithm below:

\begin{center} \fbox{
  \begin{minipage}[h]{0.99\linewidth}
      \noindent\textbf{Algorithm $\planarize(G_0,Q,\mu)$. } \\
      \begin{tabular}{ll}
        \textit{Input. } & a graph $G_0$ of genus $g$, terminals $Q \subseteq \vrt{G_0}$, a constant $\mu$\\
        \textit{Output. } & a preprocessed subgraph $G \subseteq G_0$ and a cutgraph $\CG$ of $G$\\
      \end{tabular} \\[-2ex]
      \begin{enumerate} 
      \item Apply $\preprocess(G_0,Q,\mu)$ and let $G$ be the obtained subgraph.
      \item Find a 2-approximate Steiner tree $T_0$ for $Q$ and contract it to a  vertex $r$.
      \item Find a shortest paths tree $\SPT$ rooted at $r$.
      \item Uncontract $r$ and set $T_1 = T_0 \cup \SPT$. {\em ($T_1$ is a
        spanning tree of $G$)}
      \item Find a spanning tree $T_1^\star$ in $G^\star -
        \edge{T_1}$. {\em ($T_1^\star$ is a spanning tree of $G^\star$)}
      \item Let $X := \edgeG - \edgeT -\edge{T^\star}$.
      \item Return $\CG := T_0 \cup \set{\mbox{loop}(T_1,e) : e \in X}$ together with $G$.
      \end{enumerate}
    \end{minipage} }
\end{center}

\begin{lemma} \label{lem:cut-graph} The algorithm $\planarize$
  returns a cut graph $\CG$ such that cutting
  $G$ open along $\CG$ results in a planar graph $G_p$ with a face
  $f_\sigma$ whose facial walk $\sigma$
\begin{enumerate} 
\item[(i)] is a simple cycle; 
\item[(ii)] contains all terminals (some terminals might appear more than
  once as multiple copies might be created during the cutting process); and
\item[(iii)] has length $\length{\sigma} \leq 2(8\mu g+2)\OPT$.
\end{enumerate}
The algorithm can be implemented in linear time.
\end{lemma}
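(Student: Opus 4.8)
The plan is to verify that the set $\CG$ produced by \planarize\ is a cut graph, and then track the length of the resulting boundary walk $\sigma$. First I would establish that $(T_1, T_1^\star, X)$ really is a tree-cotree decomposition of $G$: step~4 gives a spanning tree $T_1$ of $G$ (since $T_0$ spans $Q$ and $\SPT$, after uncontracting $r$, reaches every vertex that survived \preprocess); step~5 gives a spanning tree $T_1^\star$ of $G^\star$ by the bounded-genus analogue of Observation~\ref{obs:planartree} mentioned in the text; and then Euler's formula forces $|X| = g$. By Lemma~\ref{lem:eppstein}, $\set{\mbox{loop}(T_1,e) : e \in X}$ is a cut graph of $G$, i.e. cutting along it yields a planar graph in which the images of these loops bound a single disk face. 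Adding the extra tree $T_0$ to the cut graph does not destroy this property, since contracting/cutting along a tree keeps the complement of the cut a disk; so $\CG := T_0 \cup \set{\mbox{loop}(T_1,e) : e \in X}$ cuts $G$ open to a planar graph $G_p$ with one new face $f_\sigma$.

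Next I would argue (i) and (ii). For (ii): every terminal lies on $T_0 \subseteq \CG$, and every vertex and edge of $\CG$ ends up on the facial walk $\sigma$ of $f_\sigma$ after cutting (possibly in several copies, since cutting along a tree-like structure doubles its edges), so all terminals appear on $\sigma$. For (i): cutting a surface open along a cut graph whose underlying subgraph is connected and contains a spanning set of the relevant loops produces a boundary that traverses each cut edge exactly twice and visits the cut structure as a single closed walk homeomorphic to a circle; because $\CG$ is connected (it is the union of $T_0$ with loops all rooted at $r$, and $r \in T_0$), $\sigma$ is a single simple cycle bounding the disk $f_\sigma$. I would cite \cite[pages 105--106]{MoharThomassen01} for the precise statement that cutting along a connected cut graph yields a disk with simple-cycle boundary.

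For (iii), the length bound, I would proceed as follows. Each $\mbox{loop}(T_1,e)$ consists of the edge $e$ plus two root-paths in $T_1$; in $G/T_0$ these root-paths lie in the shortest-path tree $\SPT$, so by Proposition~\ref{prop:preprocess} each endpoint of $e$ is at distance $\leq 4\mu\OPT$ from (the contracted) $T_0$, giving $\length{\mbox{loop}(T_1,e)} \leq 2 \cdot 4\mu\OPT + \length{e} \leq 8\mu\OPT + O(\OPT)$ — here I would fold the single edge $e$ into the bound using that $e$ also survived \preprocess, so $\length{e} \le 2 \cdot 4\mu\OPT$ as well, or more carefully use the distance-to-edge definition $\dist_{G_0}(r,e)$ from the text to get $\length{\mbox{loop}(T_1,e)} \leq 8\mu\OPT$. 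Summing over the $|X| = g$ loops and adding $\length{T_0} \leq 2\OPT$ gives $\length{\CG} \leq 8\mu g\OPT + 2\OPT = (8\mu g + 2)\OPT$, and since cutting along $\CG$ traverses every edge of $\CG$ at most twice, $\length{\sigma} \leq 2(8\mu g + 2)\OPT$. Linear-time implementation follows from: \preprocess\ is linear (shown above), a 2-approximate Steiner tree is linear \cite{TazariMuellerh09-DAM}, shortest-path trees are linear \cite{HenzingerKRS97}, and the tree-cotree decomposition is linear \cite{Eppstein03}; the loops and the cutting operation are linear in $|E|$.

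I expect the main obstacle to be the careful bookkeeping in part~(iii): pinning down exactly which tree the root-paths of $\mbox{loop}(T_1,e)$ lie in after the uncontraction of $r$ (so that Proposition~\ref{prop:preprocess} applies cleanly), and handling the contribution of the non-tree edge $e \in X$ so that the constant comes out to $8\mu g + 2$ rather than something slightly larger. A secondary subtlety is justifying that adjoining $T_0$ to the Eppstein cut graph preserves the "cuts to a disk" property and does not split $\sigma$ into several cycles — this needs the observation that $T_0$ is already entirely inside (indeed part of) the cut locus and is connected to every loop at the common root $r$.
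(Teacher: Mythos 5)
Your proposal is correct and follows essentially the same route as the paper: a tree-cotree decomposition plus Eppstein's lemma to get the cut graph, Proposition~\ref{prop:preprocess} to bound each loop (outside $T_0$) by $8\mu\OPT$, adding $\length{T_0}\leq 2\OPT$ and doubling for $\sigma$, with the same linear-time ingredients. The ``more careful'' variant you mention---using the distance-to-edge definition $\dist_{G_0}(r,e)$ so that the edge $e$ is absorbed into one of the two $4\mu\OPT$ terms---is exactly how the paper closes that bookkeeping, so no gap remains.
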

\begin{proof}
  Clearly, $(T_1,T_1^\star,X)$ is tree-cotree decomposition of $G$ and
  so, by Lemma~\ref{lem:eppstein}, $\CG$ is a cut graph. By
  Euler's formula, we get that $|X| = g$, the Euler genus of $G$. 

  Each
  edge $e = vw \in X$ completes a (nonseparating,
  not necessarily simple) closed walk as follows: a shortest path $P_1$ from
  $T_0$ to $v$, the edge $e$, a shortest path $P_2$ from $w$ to $T_0$
  and possibly a path $P_3$ in $T_0$. By
  Proposition~\ref{prop:preprocess}, we know that $e$ is at distance
  at most $4\mu \OPT$ from $T_0$ and so, both $P_1$ and $P_2$, and at least one of
  $\{ P_1 \cup \set{e}, P_2 \cup \set{e}\}$ have length at most $4\mu
  \OPT$. Hence, we have that $\length{P_1 \cup \set{e} \cup P_2} \leq
  8\mu \OPT$. Because there are (exactly) $g$ such cycles in $\CG$, we get
  that
  $$
  \ell(\CG) \leq g \cdot 8\mu\OPT + \ell(T_0) \leq (8\mu g+2)\OPT .
  $$
  Since $\CG$ is a connected cut graph and $T^\star \cap \CG =
  \emptyset$, cutting $G$ open along $\CG$ results in a connected
  planar graph with boundary $\sigma$.  Each edge of $\CG$ appears
  twice in $\sigma$ and each edge of $\sigma$ is derived from $\CG$,
  so $\ell(\sigma) = 2 \ell(\CG)$ (see
  Fig.~\ref{fig:cutgraph}).


  As mentioned in the previous section, $T_0$ and $\SPT$ can be computed in
  linear time on bounded-genus
  graphs~\cite{HenzingerKRS97,TazariMuellerh09-DAM}. $T_1^\star$ can be obtained,
  for example, by a simple breadth-first-search in the dual. The remaining
  steps can also easily be implemented in linear time.\qed
\end{proof}

\subsection{Proof of Theorem~\ref{thm:genus-mortar-graph}} 

We complete the construction of a mortar graph for genus-$g$ embedded
graphs.

Let $G_p$ be the result of planarizing $G$ as guaranteed by
Lemma~\ref{lem:cut-graph}.  $G_p$ is a planar graph with
boundary $\sigma$ such that $\sigma$ spans $Q$ and has length $\leq
2(8\mu g+2)\OPT$.  Let $\MG$ be the mortar graph guaranteed by
Theorem~\ref{thm:planar-mg} as applied to $G$ with $\sigma$ as
its outer face.  Every edge of $\MG$ corresponds to an edge of $G$.
Let $\MG'$ be the subgraph of $G$ composed of edges corresponding to
$\MG$.  Every face $f$ of $\MG$ (other than $\sigma$) corresponds to a
face $f'$ of $\MG'$ and the interior of $f'$ is homeomorphic to a disk
on the surface in which $G$ is embedded.  It is easy to verify that
$\MG'$ is indeed a mortar graph of $G$; and the length bounds
specified in the statement of the theorem follow directly from
Theorem~\ref{thm:planar-mg} and the bound on the length of
$\sigma$. \qed

\subsection{Structure Theorem}\label{sec:structure-thm}

Along with the mortar graph, Borradaile et~al.~\cite{BorradaileKM09}
define an operation $\BC$ called {\em brick-copy} that allows a
succinct statement of the Structure Theorem.  For each brick $B$, a
subset of $\theta$ vertices are selected as {\em portals} such that
the distance along $\partial B$ between any vertex and the closest
portal is at most $\length{\partial B}/\theta$.  For every brick $B$,
embed $B$ in the corresponding face of $\MG$ and connect every portal
of $B$ to the corresponding vertex of $\MG$ with a zero-length {\em
  portal edge}: this defines $\BC(\MG,\theta)$.  $\BC(\MG,\theta)$ is
illustrated in Figure~\ref{fig:mg}~(d). We denote the set of all
portal edges by $\portaledges$. The following simple observation, proved
in~\cite{BorradaileKM09} holds also for bounded-genus graphs:

\begin{observation}[\cite{BorradaileKM09}] \label{lem:soln} If $A$ is a
  connected subgraph of $\BC(\MG, \theta)$, then $A -
  \portaledges$ is a connected subgraph of $G$ spanning the same vertices
  of $G$.
\end{observation}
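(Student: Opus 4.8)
The plan is to exhibit a natural projection from $\BC(\MG,\theta)$ back onto $G$ under which every portal edge becomes a loop, and then to check that this projection preserves both the vertex set and connectivity.

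First I would make the projection explicit. Recall how $\BC(\MG,\theta)$ is built: $\MG$ is a subgraph of $G$; cutting $G$ along a facial walk $\partial F$ of $\MG$ yields a brick $B$, which is a subgraph of $G$ together with a copy $\partial B$ of $\partial F$; then each portal of $B$ (a vertex of $\partial B$, hence a copy of some vertex of $\MG \subseteq G$) is joined by a zero-length portal edge to the vertex of $\MG$ of which it is a copy. Define a map $\phi$ by sending every vertex of $\MG$ to itself, every interior vertex of a brick to itself (it is literally a vertex of $G$), and every brick-boundary vertex to the vertex of $\MG$ it copies; extend $\phi$ to the non-portal edges of $\BC(\MG,\theta)$ by sending each such edge to the edge of $G$ it copies. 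By construction, for every portal edge $uv$ we have $\phi(u)=\phi(v)$.

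Next I would check the two claims. Let $\phi(A-\portaledges)$ denote the subgraph of $G$ with vertex set $\phi(V(A))$ and edge set $\{\phi(e): e\in E(A),\ e\notin\portaledges\}$; this is a well-defined subgraph of $G$, since every non-portal edge of $A$ maps to an edge of $G$ (distinct copies of the same edge of $G$ simply collapse, which is harmless). Deleting edges does not change the vertex set, so $V(A-\portaledges)=V(A)$, and hence $\phi(A-\portaledges)$ spans exactly $\phi(V(A))$, i.e.\ exactly the vertices of $G$ occurring in $A$ --- this is the ``spans the same vertices'' claim. For connectivity, take two vertices $x,y$ of $\phi(A-\portaledges)$, pick preimages $\tilde x,\tilde y\in V(A)$, and (since $A$ is connected) a walk $\tilde x=v_0,e_1,v_1,\dots,e_k,v_k=\tilde y$ in $A$. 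Applying $\phi$ step by step, a step $v_{i-1},e_i,v_i$ with $e_i\notin\portaledges$ becomes the edge $\phi(e_i)$ of $\phi(A-\portaledges)$, while a step with $e_i\in\portaledges$ is trivial because $\phi(v_{i-1})=\phi(v_i)$; concatenating gives a walk from $x$ to $y$ in $\phi(A-\portaledges)$.

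The only point that even needs care is making $\phi$ literally well-defined across all the copies of mortar vertices and edges produced by the cutting operations, and this is precisely where the bounded-genus setting could in principle intervene --- but cutting along a facial walk and attaching zero-length edges is a purely local, combinatorial operation, and each brick is planar by property~1 of Definition~\ref{def:mg}, so the construction and the bookkeeping are word-for-word those of the planar case in \cite{BorradaileKM09}; the surface on which $\MG$ is embedded never enters the argument. Thus the observation carries over with no genuine obstacle, which is exactly why it is stated as an observation.
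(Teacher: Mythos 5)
Your proof is correct and is essentially the argument the paper relies on (the paper simply cites \cite{BorradaileKM09} and notes the observation carries over, since each brick-copy operation is local and the bricks are planar): the endpoints of every portal edge are copies of one and the same vertex of $G$, so collapsing them via your projection $\phi$ turns any walk in $A$ into a walk in $G$ avoiding portal edges, preserving both connectivity and the set of $G$-vertices spanned. Making $\phi$ explicit is a fine way to write it down, and your treatment of the degenerate cases (portal-edge steps becoming trivial, duplicate copies of mortar edges collapsing) is exactly the bookkeeping the observation implicitly assumes.
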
 

The following Structure Theorem is the heart of the correctness of the
$\PTAS$es.  

\begin{theorem}[Structure Theorem]
  Let $\PPP$ be one of the subset-connectivity problems \steiner,
  $\set{0,1,2}$-edge-connectivity \survive, or \subtsp. Let $G$ be
  an edge-weighted graph embedded on a surface, $Q \subseteq \vrtG$ a
  given set of terminals, and $0 < \epsilon \leq 1$. Let
  $\MG(G,Q,\epsilon)$ be a corresponding mortar graph of weight at
  most $\alpha \OPT$ and supercolumns of weight at most $\epsilon
  \OPT$ with spacing $\kappa$. There exist constants
  $\beta(\epsilon,\kappa)$ and $\theta(\alpha,\beta)$ depending
  polynomially on $\alpha$ and $\beta$ such that
  $$
  \OPT_\PPP(\BC(\MG,\theta),Q) \leq (1+c \epsilon)\OPT_\PPP(G,Q) \, ,
  $$ where $c$ is an absolute constant.  Here $\beta =
  o(\epsilon^{-2.5}\kappa)$ for \steiner and $\{0,1,2\}$-edge
  connectivity \survive and $\beta = \Oof(\kappa)$ for \subtsp.
  (Recall that $\alpha$ and $\kappa$ depend polynomially on
  $\epsilon^{-1}$ and $g$ by Theorem~\ref{thm:genus-mortar-graph}.)
\end{theorem}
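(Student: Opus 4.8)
The plan is to reduce this to the planar Structure Theorem of Borradaile, Klein and Mathieu~\cite{BorradaileKM09} (and, for \survive, of Borradaile and Klein~\cite{BorradaileKlein08}), exploiting property~1 of Definition~\ref{def:mg}: every brick of $\MG$ is planar. One cannot simply apply the planar theorem to the planarized graph $G_p$ of Lemma~\ref{lem:cut-graph}, since cutting $G$ open along $\CG$ removes connectivity and hence $\OPT_\PPP(G_p,Q)$ need not be close to $\OPT_\PPP(G,Q)$; we must argue about the surface-embedded graph $G$ directly. So I would fix an optimal $\PPP$-solution $S$ in $G$, with $\ell(S)=\OPT_\PPP(G,Q)$, and decompose it relative to $\MG$: write $S\cap\MG$ for its mortar part and, for each face $F$ of $\MG$ with brick $B_F$, write $S_F := S\cap B_F$ for its brick part, splitting $S$ at the vertices of $\partial B_F$.

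The core step is the per-brick restructuring, which is purely planar and therefore applies directly from~\cite{BorradaileKM09,BorradaileKlein08}. For each brick $B$, properties 2--5 of Definition~\ref{def:mg} --- the decomposition of $\partial B$ into $W,N,E,S$, all terminals lying on $N\cup S$, $N$ being $0$-short, every proper subpath of $S$ being $\epsilon$-short, and the $\kappa$-spacing of the supercolumns --- let one replace $S_B$ by a subgraph $R_B\subseteq B$ that: (i) joins the vertices of $S$ lying on $\partial B$ into exactly the same connected groups as $S_B$ does, and moreover preserves the connectivity requirements that cross $\partial B$ (the relevant $\min\set{c_x,c_y}$-connectivities for \survive, the traversal/parity data for \subtsp); (ii) has weight at most $(1+\epsilon)\ell(S_B)$ plus additional charges that, exactly as in~\cite{BorradaileKM09,BorradaileKlein08}, are assigned to supercolumns and to $\OPT$ in a way that sums globally to $\Oof(\epsilon\OPT)$; and (iii) meets $\partial B$ in at most $\beta=\beta(\epsilon,\kappa)$ vertices, where one may take $\beta = o(\epsilon^{-2.5}\kappa)$ for \steiner and $\set{0,1,2}$-edge-connected \survive and $\beta = \Oof(\kappa)$ for \subtsp. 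None of these brick lemmas refers to the ambient surface --- only to the planar embedding of the individual brick --- so nothing in them needs reworking.

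Next I would portalize and reassemble. Choose $\theta = \Theta(\alpha\beta\epsilon^{-1})$, which is polynomial in $\alpha$ and $\beta$ since $\epsilon^{-1}=\Oof(\alpha)$; by the defining property of portals, every vertex of $\partial B$ lies within distance $\ell(\partial B)/\theta$ along $\partial B$ of a portal of $B$. In each brick, slide each of the $\le\beta$ vertices where $R_B$ meets $\partial B$ to the nearest portal, extending both $R_B$ and the retained mortar part $S\cap\MG$ along the boundary by the corresponding short subpaths; this costs $\Oof(\beta\,\ell(\partial B)/\theta)$ per brick, hence $\Oof(\beta\,\ell(\MG)/\theta) = \Oof(\alpha\beta\OPT/\theta) = \Oof(\epsilon\OPT)$ in total, using $\sum_B \ell(\partial B) \le 2\ell(\MG) \le 2\alpha\OPT$. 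Together with the $\Oof(\epsilon\OPT)$ of (ii), the union of the portalized $R_B$'s, the retained part $S\cap\MG$, and the portal edges gives a subgraph $S'$ of $\BC(\MG,\theta)$ that crosses between faces of $\MG$ only at portals and satisfies $\ell(S') \le \OPT_\PPP(G,Q) + \Oof(\epsilon\OPT) \le (1+c\epsilon)\OPT_\PPP(G,Q)$ for an absolute constant $c$, using $\OPT \le \OPT_\PPP$.

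It remains to verify that $S'$ is feasible for $\PPP$, which I expect to be the main obstacle. For \steiner this is the connectivity argument of~\cite{BorradaileKM09}: since $S\cap\MG$ is kept intact and each $R_B$ joins precisely the groups of boundary vertices that $S_B$ did, $S'$ is connected and spans $Q$ --- a graph-theoretic fact that uses the planarity of neither $G$ nor $\MG$ nor $\BC(\MG,\theta)$, and so transfers unchanged. For \subtsp and \survive I would invoke the corresponding feasibility arguments of~\cite{BorradaileKlein08}, preserving, respectively, an Eulerian connected spanning structure and the required $\min\set{c_x,c_y}$ edge-disjoint $x$--$y$ paths; these too are stated purely in terms of bricks and the mortar skeleton. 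The genuinely delicate point --- already carried out in the planar setting --- is that the per-brick replacement in step~2 must simultaneously honor the connectivity demands \emph{and} keep the number of boundary contacts bounded by $\beta$; but since this all happens inside individual planar bricks over a fixed mortar skeleton, the bounded genus contributes no new topological difficulty, and the only genuinely new work in this paper lies upstream, in the mortar-graph construction of Theorem~\ref{thm:genus-mortar-graph}.
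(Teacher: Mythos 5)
For \steiner and $\set{0,1,2}$-edge-connectivity \survive your route is exactly the paper's: since every brick is planar by Definition~\ref{def:mg}, the per-brick replacement lemmas and the portal/reassembly argument of~\cite{BorradaileKM09,BorradaileKlein08} apply verbatim, and the paper indeed proves these two cases by citation, just as you do (your extra detail about sliding joining vertices to portals and charging $\Oof(\beta\,\ell(\MG)/\theta)=\Oof(\epsilon\OPT)$ matches their accounting, and your choice $\theta=\Theta(\alpha\beta\epsilon^{-1})$ matches the paper's $\theta=4\epsilon^{-1}\beta\alpha$).

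The gap is \subtsp. You propose to ``invoke the corresponding feasibility arguments of~\cite{BorradaileKlein08}'' and treat the per-brick step as an off-the-shelf planar lemma, but no such brick-level structure lemma for tours exists in the literature: \cite{BorradaileKlein08} is about $2$-edge-connectivity, not tours, and Klein's planar \subtsp PTAS~\cite{Klein06} predates and does not use the mortar-graph/brick framework (the paper says this explicitly at the start of Section~\ref{sec:subtsp}). Precisely the point you flag as ``the genuinely delicate point'' --- replacing the intersection of a tour with a brick so that the number of boundary contacts drops to $\beta=\Oof(\kappa)$ while the edge multiset still contains a tour spanning the right vertices --- is the new content the paper has to supply, as Theorem~\ref{thm:tsp-prop}. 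Its proof is not a citation: one first applies Arora's Patching Lemma (Lemma~\ref{lem:patching}) to every supercolumn so the tour crosses each of $E$ and $W$ at most twice (cost $\leq 3\epsilon\OPT$, a preliminary step your outline omits and which is needed to even satisfy the hypotheses of the brick theorem); then inside a brick one classifies the boundary-to-boundary paths into ${\cal P}_{S\vee N}$, ${\cal P}_{E\vee W}$, ${\cal P}_{S\wedge N}$, reroutes $S$-to-$S$ and $N$-to-$N$ paths along the $\epsilon$-short/$0$-short boundary, and for the $S$-to-$N$ paths applies the Patching Lemma along the subpaths $Q_i$ of $S$ between consecutive $s_i$'s, afterwards re-adding two copies of $Q_i$ to recover the terminals on $Q_i$ and checking at each replacement that tour-ness (your ``traversal/parity data'') is preserved; this yields $\beta=6\kappa+8$ and the $(1+5\epsilon)$ factor. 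So your condition (i) for \subtsp is exactly what must be proved, and asserting it by reference to planar sources is where your argument fails; the rest of your assembly (mapping into $\BC(\MG,\theta)$, connecting via portal and mortar edges, summing the charges) then goes through as in Section~\ref{sec:struct-TSP}.
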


It is due to our special way of defining and constructing a mortar
graph for bounded-genus graphs that this theorem follows immediately
as for the planar cases: the crucial point here is that our bricks are
always planar -- even when the given graph is embedded in a surface of
higher genus. The Structure Theorem for \steiner is proved
in~\cite{BorradaileKM09}, the case of $\set{0,1,2}$-edge-connectivity
\survive is studied in~\cite{BorradaileKlein08}, and we show that the
theorem holds for \subtsp in Section~\ref{sec:subtsp}. Note
  that for \subtsp, it is possible to obtain a singly exponential
  algorithm by following the spanner construction of
  Klein~\cite{Klein06} after performing the planarizing step
  (Lemma~\ref{lem:cut-graph}). Our presentation here is chosen to
  unify the methods for all problems studied.

The Structure Theorem essentially says that there is a constant
$\theta$ depending polynomially on $\epsilon^{-1}$ such that in
finding a near-optimal solution to $G$, we can restrict our attention
to $\BC(\MG,\theta)$. Whenever we wish to apply our framework to a
new problem, it is essential to prove a similar structure theorem for
the considered problem.

\section{Obtaining $\PTAS$es for bounded-genus graphs}

We present two methods of obtaining polynomial-time approximation
schemes.  The first is a generalization of the framework of
Klein~\cite{Klein06} for planar graphs that is based on finding a
\emph{spanner} for a problem, a subgraph containing a nearly optimal
solution having length $\Oof(\OPT)$.  In Section~\ref{sec:spanner} we
show how to find such a spanner and in Section~\ref{sect:ptas_spanner}
we generalize Klein's framework to higher genus graphs using the
techniques of Demaine et~al.~\cite{DemaineHM07}.  In the second
method, dynamic programming is done over the bricks of the mortar
graph.  This generalizes the framework of Borradaile
et~al.~\cite{BorradaileKM09} for planar graphs to higher genus
graphs.  While both methods result in $\Oof(n \log n)$ algorithms, the
first method is doubly exponential in a polynomial in $g$ and
$\epsilon^{-1}$ and the second is singly exponential.

\subsection{Spanner for Subset-Connectivity Problems}\label{sec:spanner}

A spanner is a subgraph of length
$\Oof_{\epsilon,g}(\OPT)$ that contains a $(1+\epsilon)$-approximate
solution.  Here we show how to find a spanner for bounded-genus graphs
and the subset-connectivity problems considered in this paper. After a
mortar graph is computed, the construction is, in fact, exactly the
same as in the planar cases, namely:
\begin{quote}
  For each brick $B$ defined by $\MG$ and for each subset $X$ of the
  portals of $B$, find the optimal Steiner tree of $X$ in $B$ (using
  the method of Erickson et~al.~\cite{EricksonMV87}).  The spanner $\Gspan$ is
  the union of all these trees over all bricks plus the edges of the
  mortar graph.
\end{quote}
To prove the correctness of our spanner theorem for the case of
$\set{0,1,2}$-edge-connectivity \survive, we need to appeal to the
following result of Borradaile and Klein, which we have simplified the
statement of here:

\begin{theorem}[{\cite[Theorem~5]{BorradaileKlein08}}]\label{thm:2ec-trees}
  Consider an instance of the $\{0,1,2\}$-edge connectivity problem.
  There is a feasible solution $S$ to this instance that is a subgraph of
  $\BC(MG)$ such that 
  \begin{itemize}
  \item $\ell(S) \leq (1+c\epsilon)\OPT$ where $c$ is an absolute
    constant, and
  \item the intersection of $S$ with any brick $B$ is a set of $O(1)$
    trees the set of leaves of which are portals.
  \end{itemize}
\end{theorem}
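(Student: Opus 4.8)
The plan is to begin with an optimal solution $\OPT$ to the given $\{0,1,2\}$-edge-connectivity instance in $G$ and to transform it, in three stages, into the desired $S\subseteq\BC(\MG)$, maintaining feasibility throughout and tracking the multiplicative cost blow-up. The stages follow the proof of the Structure Theorem but additionally keep the per-brick structure under control. \emph{Stage~1 (confining crossings to a few joining vertices).} First make $\OPT$ meet the mortar graph only at a bounded number of vertices on each brick boundary. For a subpath of $\OPT$ that runs through the interior of a brick $B$ between two points of $\partial B$, the brick properties of Definition~\ref{def:mg} let us either keep it (if it stays far from $N$) or push it onto $S$ or $N$ at small relative cost (property~5 together with the $\epsilon$-shortness of proper subpaths of $S$ and the $0$-shortness of $N$); a charging argument as in the planar cases~\cite{BorradaileKM09,BorradaileKlein08}, using $\ell(\MG)\le\alpha\OPT$ and the bound $\epsilon\OPT$ on supercolumns, bounds the total length increase by $c\epsilon\OPT$ and leaves only $O_\epsilon(1)$ \emph{joining vertices} on each $\partial B$. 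These reroutings can be performed without breaking the required edge-disjoint routes, because every terminal of $B$ lies on $N\cup S\subseteq\partial B$, so the only useful role of the interior of $B$ is to realize a connectivity pattern among boundary vertices.

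\emph{Stage~2 (reducing the brick interior to $O(1)$ trees).} Fix a brick $B$. Since $B$ is planar (property~1), embed the restriction of $S$ to the face of $\BC(\MG)$ belonging to $B$ and uncross it, which changes neither its length nor the pattern of its contacts with $\partial B$. Delete every connected component of this restriction that reaches no joining vertex: it meets no terminal and carries no demand, so feasibility is untouched; since distinct components use disjoint subsets of the $O_\epsilon(1)$ joining vertices, what remains has $O_\epsilon(1)$ components. Finally appeal to the normal form for $\{0,1,2\}$-edge-connected subgraphs of~\cite{BorradaileKlein08}, by which the solution decomposes into a ``primary'' tree covering the demand-$\ge 1$ pairs together with a bounded family of edge-disjoint augmenting paths covering the demand-$2$ pairs; any cycle that would otherwise remain inside $B$ is opened and its two contacts with $\partial B$ reconnected along the mortar boundary at relative cost $O(\epsilon)$ by the brick properties, again absorbed into the $c\epsilon\OPT$ budget. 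After pruning dangling branches, $S\cap B$ is a constant number of edge-disjoint trees whose leaves are exactly the joining vertices they touch.

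\emph{Stage~3 (moving joining vertices to portals).} Replace each joining vertex by the nearest portal of $B$, joined to it by the corresponding subpath of $\partial B$ (of length at most $\ell(\partial B)/\theta$, by the definition of portals) and, where needed, the zero-length portal edge of $\BC(\MG)$. With $O_\epsilon(1)$ joining vertices per brick, the total extra length is $O_\epsilon(\ell(\MG)/\theta)\le c\epsilon\OPT$ once $\theta=\theta(\alpha,\beta)$ is taken as a sufficiently large polynomial in $\alpha$ and $\beta$. By Observation~\ref{lem:soln}, applied to each tree, and since contracting the zero-length portal edges preserves edge-disjointness of paths, feasibility in $G$ is maintained. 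The result is a feasible $S\subseteq\BC(\MG)$ with $\ell(S)\le(1+c\epsilon)\OPT$ whose intersection with every brick is a set of $O(1)$ edge-disjoint trees with leaves at portals.

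The hard part is Stage~2. For Steiner tree, uncrossing, deleting floating components, and collapsing a tangle of interior edges to a single tree on the same boundary contacts are routine; here the same operations must not destroy $2$-edge-connectivity between boundary terminals, and the interior cycles one would like to remove are exactly where that connectivity can be carried. The way to resolve this, following Borradaile and Klein~\cite{BorradaileKlein08}, is to first bring the solution into the ``tree plus bounded augmentation'' normal form and then verify that each reduction preserves every edge-disjoint route --- using, for a demand-$2$ pair, that one of its two paths can always be taken along the $0$-short boundary path $N$ already in the mortar graph --- and to bound how many augmenting paths a single brick can carry, which is what pins down the constant in ``$O(1)$ trees''.
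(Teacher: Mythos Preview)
The paper does not prove this theorem at all: it is quoted verbatim (in simplified form) from~\cite[Theorem~5]{BorradaileKlein08} and used as a black box in the proof of Theorem~\ref{thm:spanner}. There is therefore no ``paper's own proof'' to compare your attempt against.

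As for the sketch itself: your three-stage outline (bound the joining vertices per brick, normalize the per-brick structure, snap joining vertices to portals) is indeed the shape of the argument in~\cite{BorradaileKlein08}, and Stages~1 and~3 are essentially the generic Structure-Theorem machinery already described in Section~\ref{sec:structure-thm}. But Stage~2 is where all the content lies, and your write-up defers to~\cite{BorradaileKlein08} at exactly the point that matters: you invoke a ``tree plus bounded augmentation'' normal form and assert that interior cycles can be opened and reconnected along $\partial B$ without losing $2$-edge-connectivity, but you do not actually establish either claim. In particular, the sentence ``one of its two paths can always be taken along the $0$-short boundary path $N$'' is not justified --- a demand-$2$ pair need not have both endpoints on $N$, and even when it does, rerouting one path onto $N$ may collide with another pair's path already using $N$. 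Bounding the number of augmenting paths that a single brick can carry (your last sentence) is precisely the technical heart of~\cite{BorradaileKlein08}, and your proposal does not supply it. So what you have is a correct high-level plan that still owes its key lemma to the cited reference, which is also exactly what the present paper does.
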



\begin{theorem}[Spanner Theorem]\label{thm:spanner}
  Let $G$ be an edge-weighted graph embedded on a surface of Euler
  genus $g$ and $Q \subseteq \vrtG$ a given set of terminals. There
  exists a spanner $\Gspan \subseteq G$ such that
  \begin{description} 
  \item[$\Gspan$ is spanning:] $\Gspan$ contains a $(1+c\epsilon)$-approximate solution to
    \steiner, $\set{0,1,2}$-edge-connected \survive, and \subtsp; and
  \item[$\Gspan$ is short:] $\length{\Gspan} \leq f(\epsilon,g) \OPT$;
  \end{description}
  where the function $f(\epsilon,g)$ is singly exponential in a
  polynomial in $\epsilon^{-1}$ and $g$, and $c$ is an absolute
  constant. The spanner can be found in $\Oof(n \log n)$ time.
\end{theorem}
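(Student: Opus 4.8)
The plan is to run the construction described in the displayed quote above and to verify its two claimed properties by pushing a structured near-optimal solution into the spanner one brick at a time; the only place where bounded genus enters is in producing the mortar graph, because every brick of a mortar graph is a \emph{planar} graph (Definition~\ref{def:mg}, property~1), so once $\MG$ is in hand the rest of the argument is identical to the planar setting. Concretely, I would first invoke Theorem~\ref{thm:genus-mortar-graph} (with the constant $\mu$ of the problem at hand, so $\mu=1$ for \steiner and $\mu=2$ for \subtsp and \survive) to get a mortar graph $\MG=\MG(G,Q,\epsilon)\subseteq G$ with $\length{\MG}\le\alpha\OPT$, supercolumns of weight at most $\epsilon\OPT$, and spacing $\kappa$, where $\alpha$ and $\kappa$ are polynomial in $\epsilon^{-1}$ and $g$. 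Feeding $\alpha$ and $\kappa$ into the Structure Theorem gives a number of portals per brick $\theta=\poly(\alpha,\beta)=\poly(\epsilon^{-1},g)$ such that $\OPT_\PPP(\BC(\MG,\theta),Q)\le(1+c\epsilon)\OPT$ for each of the three problems $\PPP$. The spanner is then defined exactly as quoted: for each brick $B$ and each subset $X$ of its (at most $\theta$) portals let $T_{B,X}$ be a minimum Steiner tree of $X$ inside the planar graph $B$, computed by the algorithm of Erickson, Monma and Veinott~\cite{EricksonMV87}, and set $\Gspan := \MG \cup \bigcup_B \bigcup_X T_{B,X}$.

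For the short property I would use a direct charging argument. The mortar part contributes $\length{\MG}\le\alpha\OPT$. For each brick $B$ and each $X$ we have $\length{T_{B,X}}\le\length{\partial B}$, since the mortar boundary $\partial B$ is already a connected subgraph of $B$ spanning $X$; hence the trees of $B$ contribute at most $2^\theta\length{\partial B}$ in total. Since each edge of $\MG$ lies on the mortar boundary of at most two bricks (counting multiplicity) and no other edge lies on any $\partial B$, we get $\sum_B\length{\partial B}\le 2\length{\MG}$, and therefore $\length{\Gspan}\le(1+2^{\theta+1})\alpha\OPT =: f(\epsilon,g)\OPT$. Because $\alpha$ and $\theta$ are polynomial in $\epsilon^{-1}$ and $g$, the function $f$ is singly exponential in a polynomial in $\epsilon^{-1}$ and $g$, as required. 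The same counting also gives $\sum_B|\edge{B}|=\Oof(n)$, which I will need for the running time.

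For the spanning property I would start from the structured near-optimal solution supplied by the Structure Theorem: for $\set{0,1,2}$-edge-connected \survive this is the solution $S\subseteq\BC(\MG,\theta)$ of Theorem~\ref{thm:2ec-trees}, and for \steiner and \subtsp it is the analogous structured solution of~\cite{BorradaileKM09} and of Section~\ref{sec:subtsp}; in all cases $S$ is feasible, $\length{S}\le(1+c\epsilon)\OPT$, and the intersection of $S$ with each brick $B$ is a set of $\Oof(1)$ trees whose leaves are portals of $B$. Now replace each such tree $T_i\subseteq B$ — connecting some portal set $X_i$ — by $T_{B,X_i}\subseteq\Gspan$. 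This does not increase the length, since $T_i$ is itself a Steiner tree of $X_i$ in $B$, and because $T_{B,X_i}$ joins exactly the same portals it preserves feasibility for each of the three problems; this is precisely the brick-replacement step of the planar proofs, which goes through verbatim since $B$ is planar. The resulting solution $S'$ is feasible in $\BC(\MG,\theta)$ and uses only mortar edges and spanner trees, so by Observation~\ref{lem:soln} the subgraph $S'-\portaledges$ is a feasible solution in $G$ of weight at most $(1+c\epsilon)\OPT$ contained in $\Gspan$, with $c$ the absolute constant inherited from the Structure Theorem.

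Finally, the running time: the mortar graph, its bricks, and the portal sets are produced in $\Oof(n\log n)$ time by Theorem~\ref{thm:genus-mortar-graph}; the $\Oof(2^\theta)$ Steiner-tree computations in a brick $B$ each run in time near-linear in $|\edge{B}|$ for fixed $\theta$ by~\cite{EricksonMV87}, and since $\sum_B|\edge{B}|=\Oof(n)$ the total is $\Oof_{\epsilon,g}(n\log n)$. I expect the main obstacle to be the feasibility-preservation of the tree-replacement step for \survive (and, to a lesser extent, \subtsp): exchanging one tree for another on the same leaf set must not destroy the required edge-disjoint connections, which is exactly why the sharpened structure theorem Theorem~\ref{thm:2ec-trees} — rather than a generic one — is needed here; in the write-up I would simply quote that theorem (and its \subtsp analogue from Section~\ref{sec:subtsp}) for this step, and the only other point requiring care is confirming that the per-brick Steiner-tree computations really do sum to $\Oof(n\log n)$, which follows from the edge-disjointness of brick interiors established above.
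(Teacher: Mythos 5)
Your proposal is correct and follows essentially the same route as the paper: construct $\Gspan$ from the mortar graph of Theorem~\ref{thm:genus-mortar-graph} plus the optimal per-brick, per-portal-subset Steiner trees, bound $\length{\Gspan}\leq(1+2^{\theta+1})\alpha\OPT$, and establish the spanning property by brick-by-brick replacement of the structured solution from the Structure Theorem (using Theorem~\ref{thm:2ec-trees} for $\set{0,1,2}$-edge-connectivity and the path/shortest-path replacement for \subtsp), finishing with Observation~\ref{lem:soln}. The only cosmetic difference is that you prove the length bound and running-time accounting directly, where the paper cites~\cite{BorradaileKM09} for both.
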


\begin{proof}
  Given a mortar graph $\MG(G,Q,\epsilon)$ as guaranteed by
  Theorem~\ref{thm:genus-mortar-graph}, a spanner is constructed
  as specified above.  As in~\cite{BorradaileKM09}, the time to find
  $\Gspan$ is $\Oof(n \log n)$.  It was proved in~\cite{BorradaileKM09}
  that $\length{\Gspan} \leq (1+2^{\theta+1})\length{\MG}$.
  Therefore, $\length{\Gspan} \leq (1+2^{\theta+1})\alpha \OPT$ and
  $f(\epsilon,g) = (1+2^{\theta+1}) \alpha$ (recall that $\alpha$ and
  $\theta$ depend polynomially on $\epsilon^{-1}$ and $g$). 
  
  Now we show that $\Gspan$ contains a near-optimal solution to each
  problem.
  For \steiner, the proof follows directly from the Structure Theorem:
  the intersection of a minimal solution in $\BC(\MG,\theta)$ with a
  brick $B$ is a forest whose leaves are portals.

  For $\{0,1,2\}$-edge-connected \survive, we appeal to
  Theorem~\ref{thm:2ec-trees}: By the Structure Theorem, there
  is a solution $H$ in $\BC(\MG)$ that has length at most
  $(1+c\epsilon)\OPT$.  For each brick $B$, let $H_B$ be the
  intersection of $H$ with $B$.  
  $H_B$ is the union of trees.
  Replace each tree with the Steiner
  tree spanning the same subset as found in the spanner construction.
  Let $H'$ be the graph resulting from all such replacements:
  $\length{H'} \leq \length{H} \leq (1+c\epsilon)\OPT$.  By
  Observation~\ref{lem:soln}, the edges of $H' - \portaledges$ induce
  a solution to the problem of length at most $(1+c\epsilon)\OPT$.

  For \subtsp, the proof is similar.  By the Structure Theorem, there
  is a tour $T$ of the terminals $Q$ in $\BC(\MG)$ that has length at
  most $(1+c\epsilon)\OPT$.  For each brick $B$, let $K$ be a
  connected component of the intersection of $T$ with $B$.  Because
  the terminals are in $\MG$ and not in $B$, $K$ is a path between
  portals of $B$: replace $K$ with the Steiner tree (i.e.\ a shortest
  path) connecting these two portals found in the construction of the
  spanner\footnote{Note that to construct a spanner for \subtsp, we
    need only shortest paths between pairs of portals.}.  Let
  $T'$ be the tour resulting from all these replacements: $\length{T'}
  \leq \length{T} \leq (1+c\epsilon)\OPT$.  Appealing to
  Observation~\ref{lem:soln}, the edges of $T' - \portaledges$ induce
  a solution of length at most $(1+c\epsilon)\OPT$.\qed
\end{proof}

\subsection{\PTAS via Spanner}\label{sect:ptas_spanner}

In order to apply the \PTAS framework of Klein~\cite{Klein08} to
bounded-genus graphs, we need the following Contraction Decomposition
Theorem due to Demaine et al.:

\begin{theorem}[{\cite[Theorem~1.1]{DemaineHM07}}] \label{thm:genus-baker}
  For a fixed genus $g$, and any integer $\eta \geq 2$ and for every
  graph $G$ of Euler genus at most $g$, the edges of $G$ can be
  partitioned into $\eta$ sets such that contracting any one of the sets
  results in a graph of treewidth at most $\Oof(g^2 \cdot \eta)$.  Furthermore,
  such a partition can be found in $\Oof(g^{5/2}n^{3/2} \log n)$ time.
\end{theorem}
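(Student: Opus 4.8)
\medskip
\noindent\textbf{Proof proposal.}
The plan is to lift the planar layering technique of Baker and Eppstein to the surface, exploiting the fact that contracting a set of edges in an embedded graph is the same as deleting that set in the dual graph. First I would take the embedding of $G$ (or recover one within the allotted time), pick an arbitrary root face, and compute a breadth-first layering $V(G^\star)=L_0\cup L_1\cup\cdots$ of the \emph{dual} $G^\star$ by distance from that face; this is the only real use of the surface in defining the partition and it costs linear time. Every edge of $G^\star$ either joins two consecutive layers or lies inside one layer, so I orient each edge towards its endpoint on the smaller-indexed layer and put it in class $E_j$ exactly when that layer has index $\equiv j\pmod\eta$. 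This gives a partition $E(G)=E(G^\star)=E_0\cup\cdots\cup E_{\eta-1}$ read directly off the layering, hence computable in linear time once $G^\star$ and its BFS tree are available.

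The core claim is that contracting any fixed class $E_j$ yields treewidth $O(g^2\eta)$. Since contracting $E_j$ in $G$ equals deleting $E_j$ in $G^\star$, and $G/E_j=(G^\star-E_j)^\star$, it suffices to bound $\tw(G^\star-E_j)$ and then pass through the (genus-dependent) relation between the treewidth of an embedded graph and that of its dual. By construction, deleting $E_j$ removes exactly the dual edges that leave a layer $\equiv j$ in the downward direction, so each connected component of $G^\star-E_j$ has all of its vertices inside $\eta$ consecutive layers and --- because the upward BFS-tree edges inside such a band survive the deletion --- carries a spanning BFS forest of depth at most $\eta-1$. What is left is the surface analogue of Baker's lemma: \emph{an embedded graph of Euler genus $\le g$ all of whose vertices lie in $\eta$ consecutive BFS layers has treewidth $\mathrm{poly}(g)\cdot\eta$} (one can aim for $O(g\eta)$). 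I would prove this by cutting the surface open into a disk along $O(g)$ closed walks coming from a tree--cotree decomposition (Observation~\ref{obs:planartree}, Lemma~\ref{lem:eppstein}), chosen --- by taking the spanning tree to follow the BFS forest --- so that each cut walk is monotone across the layers and therefore carries only $O(\eta)$ vertices; the resulting planar graph then becomes $\eta$-outerplanar once its $O(g\eta)$ boundary vertices are deleted, so re-inserting those vertices into every bag of a Baker tree decomposition and re-identifying the cut keeps the width $O(g\eta)$. Combining with the primal--dual step yields $\tw(G/E_j)=O(g^2\eta)$; the nonorientable case is handled identically, cutting along one-sided cycles as recalled in the Preliminaries.

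I expect the surface analogue of Baker's lemma --- in particular getting the dependence on $g$ down to a polynomial, and making the cut walks short enough not to inflate the treewidth --- to be the main obstacle, since it is the only point at which the topology of the surface genuinely interacts with the layering; the rest (defining the partition, the contraction--deletion duality, and assembling the tree decompositions) is bookkeeping. For the running time, the partition is obtained from a single dual BFS, so the only superlinear cost is the surface preprocessing --- recovering or repairing an embedding and computing a tree--cotree decomposition on both orientable and nonorientable surfaces --- which accounts for the $O(g^{5/2}n^{3/2}\log n)$ term; the treewidth bounds themselves are existential and need not be realized in order to output the partition.
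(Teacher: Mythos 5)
The paper does not prove Theorem~\ref{thm:genus-baker} at all --- it is imported verbatim from \cite{DemaineHM07} --- so your attempt has to be measured against that proof, and it has a genuine gap at its pivot. The claim ``contracting $E_j$ in $G$ equals deleting $E_j$ in $G^\star$, so $G/E_j=(G^\star-E_j)^\star$'' fails as soon as $E_j$ contains cycles, which it unavoidably does (all edges lying inside a dual layer $\equiv j$ are placed in $E_j$). The edge-by-edge duality is: contracting a \emph{non-loop} edge corresponds to deleting its dual edge; when you contract the whole set $E_j$, the edges outside a spanning forest $F$ of $E_j$ become loops and are discarded, and the surface dual of $G/E_j$ is $G^\star$ with only $F$ deleted and $E_j\setminus F$ \emph{contracted}. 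Those dual contractions glue the top layer of one BFS band to the bottom layer of the next, so the graph whose treewidth you actually need to control does not split into components of radius $O(\eta)$: a path can hop across arbitrarily many bands through contracted blobs, paying only $O(\eta)$ per band. Consequently your (correct) bound on $\tw(G^\star-E_j)$ via the bounded-radius lemma says nothing about $\tw(G/E_j)$; moreover $G^\star-E_j$ is in general disconnected and no longer cellularly embedded, so the genus-dependent primal--dual treewidth transfer you invoke does not even apply to the pair you need. This is exactly the obstruction that makes contraction decompositions genuinely harder than Baker/Eppstein deletion decompositions; it is not bookkeeping, and even in the planar case Klein's contraction decomposition \cite{Klein08} requires a separate argument rather than a dualization of Baker.

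The structure of the actual proof in \cite{DemaineHM07} is also visible in the running time, and your accounting of it is a symptom of the gap: computing an embedding of a fixed-genus graph and a tree--cotree decomposition are linear-time tasks \cite{Mohar99,Eppstein03}, so they cannot be the source of the $\Oof(g^{5/2}n^{3/2}\log n)$ term. That cost comes from repeatedly computing short noncontractible cycles: Demaine, Hajiaghayi and Mohar define levels via a radial-type BFS and bound the treewidth by cutting the surface along roughly $g$ such cycles, reducing step by step to the planar case where Klein's contraction decomposition is applied, and the compounded losses give the $\Oof(g^2\eta)$ bound --- which is also why the present paper can quote the improvement of this step to $\Oof(n\log n)$ via \cite{CabelloChambers07}. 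Your ``surface Baker lemma'' (Euler genus $g$ plus BFS depth $\eta$ implies treewidth $\Oof(g\eta)$) is a true and citable ingredient, but in your argument it is applied to the wrong graph, and the partition you construct (pure dual-BFS levels mod $\eta$) is not the one whose correctness \cite{DemaineHM07} establishes, so the theorem as stated is not proved by your sketch.
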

Recent techniques~\cite{CabelloChambers07} for finding shortest
noncontractible cycles of embedded graphs have improved the above
running time to $\Oof(n \log n)$.\footnote{We would like to thank Jeff
  Erickson for pointing out in private communication that the
  algorithm given in~\cite{CabelloChambers07} works for both
  orientable and nonorientable surfaces.}

We review the four steps of the framework in our setting:
\begin{description}
\item[1. Spanner Step:] Find a spanner $\Gspan$ of $G$ according to
  Theorem~\ref{thm:spanner}.
\item[2. Thinning Step:] For $\eta = f(\epsilon,g)/\epsilon$ (where
  $f(\epsilon,g)$ is the function given in Theorem~\ref{thm:spanner}),
  let $S_1, \ldots, S_\eta$ be the partition of the edges of $\Gspan$ as
  guaranteed by Theorem~\ref{thm:genus-baker}.  Let $S^*$ be the set
  in the partition with minimum weight: $\length{S^*} \leq \epsilon
  \OPT$.  Let $\Gthin$ be the graph obtained from $\Gspan$ by contracting the
  edges of $S^*$.  By Theorem~\ref{thm:genus-baker}, $\Gthin$ has
  treewidth at most $\Oof(g^2\epsilon^{-1}f(\epsilon,g))$.
\item[3. Dynamic Programming Step:] Use dynamic programming (see,
  e.g.~\cite{KorachSolel90}) to find the optimal solution to the
  problem in $\Gthin$.
\item[4. Lifting Step:] Convert this solution to a solution in $G$ by
  incorporating some of the edges of $S^*$.  For \steiner, at most one
  copy of each edge of $S^*$ is introduced to maintain
  connectivity~\cite{BorradaileKM09}. In the case of $\{0,1,2\}$-edge connected
  \survive, at most two copies of each edge of $S^*$ are
  required~\cite{BorradaileKlein08}. For \subtsp, the method was
  explained in~\cite{Klein06}.
\end{description}

\paragraph{Analysis of the running time.}
By Theorem~\ref{thm:spanner}, the spanner step takes
$\Oof_{\epsilon,g}(n \log n)$ time (with singly exponential dependence
on polynomials in $g$ and $\epsilon^{-1}$).  By
Theorem~\ref{thm:genus-baker}, thinning takes time $\Oof(n \log
n)$ using~\cite{CabelloChambers07}. Dynamic programming takes time
$2^{\Oof(g^2\epsilon^{-1}f(\epsilon,g))}n$: because $f(\epsilon,g)$ is
singly exponential in polynomials in $g$ and $\epsilon^{-1}$, this
step is doubly exponential in polynomials in $g$ and $\epsilon^{-1}$.
Lifting takes linear time.  Hence, the overall running time is
$\Oof(2^{\Oof(g^2\epsilon^{-1}f(\epsilon,g))}n + n \log n)$.

\subsection{\PTAS via Dynamic Programming over the Bricks} \label{sec:dp}

In~\cite{BorradaileKM09}, Borradaile et al.\ present a \PTAS that is
singly exponential in a polynomial in $\epsilon^{-1}$ for \steiner in
planar graphs. The idea is to incorporate the spanner step into the
dynamic programming step and to use a somewhat modified thinning
step. To this end, the operator \emph{brick-contraction} $\BT$ is
defined to be the application of the operation $\BC$ followed by
contracting each brick to become a single vertex of degree at most
$\theta$ (see Figure~\ref{fig:mg}(e)). The thinning algorithm
decomposes the mortar graph $\MG$ into parts  of
\emph{bounded dual radius} (implying bounded treewidth). Applying
$\BT$ to each part maintains bounded dual radius. The algorithm computes
optimal Steiner trees inside the bricks using the method
of~\cite{EricksonMV87} only at the leaves of the dynamic programming tree, thus
eliminating the need of an a-priori constructed spanner. The
interaction between subproblems of the dynamic programming is
restricted to the portals, of which there are few.

For embedded graphs with genus $> 0$, the concept of bounded dual
radius does not apply in the same way.  We deal with treewidth
directly and obtain the following algorithm:
we apply the Contraction Decomposition
Theorem~\ref{thm:genus-baker}~\cite{DemaineHM07} to $\BT(\MG)$ and
contract a set of edges $S^\star$ in $\BT(\MG)$. However, we apply a
special weight to portal edges so as to prevent them from being
included in $S^\star$. Also, in $\BT(\MG)$, we slightly modify the
definition of contraction: after contracting an edge, we do not delete
parallel portal edges.  Because portal edges connect the mortar graph
to the bricks, they are not parallel in the graph in which we find a
solution via dynamic programming. The details are given below.

\begin{center} \fbox{
    \begin{minipage}[h]{0.95\linewidth}
      \noindent\textbf{Algorithm $\thinning(G,\MG)$. } \\
      \begin{tabular}{ll}
        \textit{Input. } & a graph $G$ of fixed genus $g$, a mortar graph $\MG$ of $G$\\
        \textit{Output. } & a set $S^\star \subseteq \edge{\BT(\MG)}$,\\
                          & a tree decomposition $(T,\chi)$ of $\BT(MG)/S^\star$\\
      \end{tabular} \\[-2ex]
      \begin{enumerate} 
      \item Assign weight \length{\partial F} to each portal edge
        in a face $F$ of $\BT(\MG)$.
      \item Apply the Contraction Decomposition
        Theorem~\ref{thm:genus-baker} to $\BT(\MG)$ with $\eta := 3\theta \alpha
        \epsilon^{-1}$ to obtain edge sets $S_1,\dots,S_\eta$; let $S^\star$ be
        the set of minimum weight.
      \item If $S^\star$ includes a portal edge $e$ of a brick $B$
        enclosed in a face $F$ of $\MG$,\\ add $\partial F$ to
        $S^\star$ and mark $B$ as ignored.
      \item Let $\MGthin := \BT(\MG) / S^{\star}$ (but do not delete parallel portal edges).
      \item Let $(T, \chi)$ be a tree decomposition of width $\Oof(g^2 \cdot \eta)$ of $\MGthin$.
      \item For each vertex $b$ of $\MGthin$ that represents an unignored
        contracted brick with portals $\set{p_1,\dots,p_{\theta}}$:
        \begin{enumerate} 
        \item[6.1.] Replace every occurrence of $b$ in $\chi$
          with $\set{p_1,\dots,p_{\theta}}$;
        \item[6.2.] Add a bag $\set{b,p_1,\dots,p_{\theta}}$ to
          $\chi$ \\ and connect it to a bag containing $\set{p_1,\dots,p_{\theta}}$.
        \end{enumerate}
      \item Reset the weight of the portal edges back to zero.
      \item Return $(T,\chi)$ and $S^\star$.
      \end{enumerate}
    \end{minipage} }
\end{center}

\begin{lemma}\label{lem:thinning}
The algorithm $\thinning(G,\MG)$ returns a set of edges $S^\star$ and a tree
decomposition $(T,\chi)$ of $\BT(\MG)/S^\star$, so that
\begin{itemize} 
\item[(i)] the treewidth of $(T,\chi)$ is at most $\xi$ where
  $\xi(\epsilon,g) = \Oof(g^2\eta\theta) =
  \Oof(g^3\epsilon^{-2}\theta^2)$;\\ in particular, $\xi$ is polynomial
  in $\epsilon^{-1}$ and $g$;
\item[(ii)] every brick is either
  \begin{itemize}
  \item marked as ignored, or 
  \item none of its portal edges are in $S^\star$; and
  \end{itemize}
\item[(iii)] $\length{S^\star} \leq \epsilon \OPT$.
\end{itemize}
\end{lemma}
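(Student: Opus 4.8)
The proof proceeds by verifying the three claims in order, tracking carefully how the algorithm modifies weights, the edge set $S^\star$, and the tree decomposition. The key observation underpinning everything is that the Contraction Decomposition Theorem~\ref{thm:genus-baker} is applied to $\BT(\MG)$, which has $\Oof(\length{\MG}/\text{(something)})$-sized vertex set in the relevant sense; more precisely, the number of edges of $\MG$ is small enough relative to $\OPT$ that choosing $\eta := 3\theta\alpha\epsilon^{-1}$ makes the minimum-weight class cheap.

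\textbf{Part (iii): the bound $\length{S^\star} \leq \epsilon\OPT$.}
First I would bound the total weight assigned in Step~1. Each brick $B$ enclosed in a face $F$ of $\BT(\MG)$ has at most $\theta$ portal edges, each given weight $\length{\partial F} = \length{\partial B}$. Summing over all bricks, and using that the mortar boundaries $\partial B$ together cover each edge of $\MG$ at most twice (each mortar edge borders at most two faces), the total portal-edge weight is at most $2\theta\length{\MG} \leq 2\theta\alpha\OPT$. The edges of $\MG$ itself inside $\BT(\MG)$ contribute their usual lengths, totaling $\length{\MG} \leq \alpha\OPT$. Hence the total weight of $\BT(\MG)$ under the modified weighting is at most $(2\theta\alpha + \alpha)\OPT \leq 3\theta\alpha\OPT$. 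Since Theorem~\ref{thm:genus-baker} partitions the edges into $\eta$ classes, the minimum-weight class $S^\star$ (before Step~3) has weight at most $3\theta\alpha\OPT/\eta = \epsilon\OPT$ by the choice of $\eta$. In Step~3, whenever a portal edge of a brick $B$ in face $F$ lies in $S^\star$, we add all of $\partial F$ to $S^\star$; but the weight we already ``charged'' to that portal edge was exactly $\length{\partial F}$, so the added mortar edges are paid for by the weight of the portal edge we are effectively replacing — more carefully, one argues that the total weight of added boundaries is bounded by the weight of the portal edges originally selected, so the weighted cost does not increase, and hence $\length{S^\star} \leq \epsilon\OPT$ still holds for the final $S^\star$. (Here one must be slightly careful that distinct selected portal edges may lie in the same face, but that only helps.)

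\textbf{Parts (ii) and (i).}
Claim (ii) is essentially by construction: if a portal edge of brick $B$ ends up in $S^\star$, then Step~3 marks $B$ as ignored; so any non-ignored brick has no portal edge in $S^\star$. One must check that Step~3's additions of $\partial F$ do not themselves introduce new portal edges into $S^\star$ — but $\partial F$ consists of mortar edges, not portal edges, so this is immediate. For claim (i), the tree decomposition produced in Step~5 has width $\Oof(g^2\eta)$ by Theorem~\ref{thm:genus-baker} applied to $\MGthin = \BT(\MG)/S^\star$. The modifications in Step~6 replace each occurrence of a contracted-brick vertex $b$ (which had degree at most $\theta$, its portals) by its $\theta$ portal vertices and attach one extra bag of size $\theta + 1$. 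A standard argument shows this blows up bag sizes by at most a factor of $\Oof(\theta)$: each bag of the original decomposition can contain several brick-vertices, but replacing each by $\theta$ portals multiplies the size by at most $\theta$, giving width $\Oof(g^2\eta\theta)$; the newly added bags have size $\theta+1 = \Oof(\theta)$, which is dominated. Substituting $\eta = 3\theta\alpha\epsilon^{-1}$ gives $\xi = \Oof(g^2\eta\theta) = \Oof(g^2\theta^2\alpha\epsilon^{-1})$, and recalling $\alpha = \Oof(g\epsilon^{-1})$ from Theorem~\ref{thm:genus-mortar-graph} yields $\Oof(g^3\epsilon^{-2}\theta^2)$ as stated. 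One should also verify the three tree-decomposition axioms survive Step~6: axiom~3 (connectedness of $T_v$) for each portal $p$ holds because the bags containing $b$ formed a subtree and $p$ is inserted exactly there plus in the new adjacent bag; edges incident to $b$ in $\MGthin$ are portal edges to $p_i$, so after Step~6 these are covered by the new bag $\set{b,p_1,\dots,p_\theta}$.

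\textbf{Main obstacle.}
The delicate point is part (iii), specifically the accounting in Step~3: one must be sure that replacing a selected portal edge by the entire mortar boundary $\partial F$ of its face does not increase $\length{S^\star}$ beyond $\epsilon\OPT$. The clean way is to observe that the \emph{weight} (not length) of $S^\star$ as selected by Theorem~\ref{thm:genus-baker} is at most $\epsilon\OPT$, that each selected portal edge carries weight $\length{\partial F}$, and that the \emph{length} of the edges $\partial F$ added for it is also at most $\length{\partial F}$ — so the length added in Step~3 is bounded by the weight removed, and since the weight of the remaining (mortar and non-ignored-brick portal) edges in $S^\star$ upper-bounds their length, the final $\length{S^\star} \leq (\text{total weight of selected } S^\star) \leq \epsilon\OPT$. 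Getting this bookkeeping exactly right — particularly handling the case where multiple selected portal edges share a face, so $\partial F$ is added only once — is where the real care lies; everything else is routine.
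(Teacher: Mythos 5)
Your proposal is correct and follows essentially the same route as the paper's own proof: the same $3\theta\alpha\OPT$ bound on the reweighted $\BT(\MG)$, the same observation that the boundaries $\partial F$ added in Step~3 are paid for by the portal-edge weights that are reset to zero in Step~7, and the same factor-$\theta$ blow-up of bag sizes from Step~6 giving $\xi = \Oof(g^2\eta\theta)$. The only point the paper treats more explicitly is why each portal $p_i$ retains a connected subtree after Step~6.1 --- namely that $T^0_b$ and $T^0_{p_i}$ intersect because the portal edge $bp_i$ of $\MGthin$ forces a common bag, so $T_{p_i} = T^0_b \cup T^0_{p_i}$ is connected --- which your sketch leaves implicit.
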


\begin{proof}
  We first verify that $(T,\chi)$ is indeed a tree decomposition. For a vertex
  $v$ and a tree decomposition $(T',\chi')$, let $T'_v$ denote the subtree of
  $T'$ that contains $v$ in all of its bags. Let us denote the tree
  decomposition of step (5) by $(T^0,\chi^0)$. For each brick vertex $b$ and
  each of its portals $p_i$, we know that $T^0_b$ is connected and $T^0_{p_i}$
  is connected and that these two subtrees intersect; it follows that after
  the replacement in step (6.2), we have that $T_{p_i} = T^0_b \cup T^0_{p_i}$
  is a connected subtree of $T$ and hence, $(T,\chi)$ is a correct tree
  decomposition. Note that Theorem~\ref{thm:genus-baker} guarantees a tree
  decomposition of width $\Oof(g^2\eta)$ if any of $S_1,\dots,S_\eta$ are contracted;
  and in step (3), we only add to the set of edges to be contracted. Hence,
  the treewidth of $(T^0,\chi^0)$ is indeed $\Oof(g^2\eta)$ and with the
  construction in step (6.1), the size of each bag will be multiplied by a
  factor of at most $\theta$. This shows the correctness of claim (i). The
  correctness of claim (ii) is immediate from the construction in step (3). It
  remains to verify claim (iii).
  
  Let $L$ denote the weight of $\BT(\MG)$ after setting the weights of the
  portal edges according to step (1) of the algorithm. We have that
  \begin{align*}
  L &\leq \length{\MG} + \sum_F \length{\partial F}\theta \leq \alpha
  \OPT + \theta \sum_F \length{ \partial F}\\
  &\leq \alpha \OPT + \theta
  \cdot 2\alpha \OPT \leq 3 \theta \alpha \OPT \, .
  \end{align*}
  \noindent Hence, the weight of $S^\star$, as selected in step (2),
  is at most $L / \eta \leq \frac{3 \theta \alpha \OPT}{3 \theta \alpha
    \epsilon^{-1}} \leq \epsilon \OPT$. The operation in step
  (3) does not add to the weight of $S^\star$: if $\partial F$ is
  added to $S^\star$, the additional weight is subtracted when the
  corresponding portal-edge weights are set to zero in step (7).\qed
\end{proof}

If a brick is ``ignored'' by $\thinning$, the boundary of
its enclosing mortar graph face is completely added to $S^\star$.
Because $S^\star$ can be added to the final solution, every potential
connection through that brick can be rerouted through $S^\star$ around
the boundary of the brick. The interior of the brick is not needed.

An almost standard dynamic programming algorithm for bounded-treewidth
graphs (cf.~\cite{ArnborgProskurowski89,KorachSolel90}) can be applied
to $\Gthin$ and $(T,\chi)$.  However, for the leaves of the tree
decomposition that are added in step (6.2) of the $\thinning$
procedure, the cost of a subset of portal edges is calculated as,
e.g., the cost of the minimum Steiner tree interconnecting these
portals in the corresponding brick. Because the bricks are planar,
this cost can be calculated by the algorithm of~\cite{EricksonMV87}
for Steiner tree or~\cite{BorradaileKlein08} for $2$-edge
connectivity.  Because all the portal edges of this brick are present
in this bag (recall that we do not delete parallel portal edges after
contractions), all possible solutions restricted to the corresponding
brick will be considered.  Because the contracted brick vertices only
appear in leaves of the dynamic programming tree, the rest of the
dynamic programming algorithm can be carried out as in the standard
case.

\paragraph{\bf Analysis of the running time.} As was shown for the
planar Steiner tree PTAS~\cite{BorradaileKM09}, the total time spent
in the leaves of the dynamic programming is $\Oof(4^\theta n)$. The
rest of the dynamic programming takes time $\Oof(2^{\Oof(\xi)}
n)$. The running time of the thinning algorithm is dominated by the
Contraction Decomposition Theorem~\ref{thm:genus-baker} which is
$\Oof_g(n \log n)$~\cite{CabelloChambers07}. Hence, the total time is
$\Oof(2^{\Oof(\xi)} n + 2^{\poly(g)}n \log n)$ for the general case;
in particular, this is singly exponential in $\epsilon^{-1}$ and $g$,
as desired. This proves Theorem~\ref{thm:main_ptas}.

\section{A Structure Theorem for Subset TSP}
\label{sec:subtsp}

Here we prove the Structure Theorem for \subtsp.
While this theorem can be used to obtain a
\PTAS for the subset tour problem in planar graphs, a \PTAS for this
problem~\cite{Klein06} predates the mortar graph framework.

Like \steiner and \survive, the Structure Theorem
(Section~\ref{sec:structure-thm}) must be proved
(Section~\ref{sec:struct-TSP}) for the \subtsp problem.  To this
end, in this section, we state and prove a local structure theorem
(Theorem~\ref{thm:tsp-prop}, Figure~\ref{fig:tour}).  This local
structure theorem describes how to replace the intersection of a tour
with a brick to reduce the number of times the tour crosses the
boundary of the brick: each crossing contributes to the size of the
dynamic-programming table.  While the intersection of a tour with a
brick is quite simple (a set of brick boundary-to-boundary paths), in
modifying the tour we must be careful to maintain that our solution is
still a tour.

We will use the following lemma due to Arora:
\begin{lemma}[Patching Lemma~\cite{Arora03}] \label{lem:patching}
  Let $S$ be any line segment of length $s$ and $\pi$ be a closed path
  that crosses $S$ at least thrice.  Then we can break the path in all
  but two of these places and add to it line segments lying on $S$ of
  total length at most $3s$ such that $\pi$ changes into a closed path
  $\pi'$ that crosses $S$ at most twice.
\end{lemma}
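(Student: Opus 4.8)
This is the classical Patching Lemma of Arora; the plan is the standard surgery argument, with the reconnection carried out carefully enough to keep the result a single closed curve.

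\textbf{Setup.} After an arbitrarily small perturbation we may assume $\pi$ meets the line through $S$ transversally and in finitely many points. Let $b_1,b_2,\dots,b_M$ be the crossing points of $\pi$ with $S$, listed in the order in which they occur along $S$, where $M\ge 3$. Deleting these $M$ points from $\pi$ cuts it into $M$ open arcs; each arc has both of its endpoints among $\{b_1,\dots,b_M\}$ and its relative interior strictly on one side of the line through $S$. At each $b_i$ precisely two arc-endpoints meet — one belonging to an arc leaving on the upper side and one to an arc leaving on the lower side — so there are exactly $M$ arc-endpoints above $S$ and $M$ below, with one of each at every $b_i$.

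\textbf{The reroute.} I would reconnect the $M$ arcs into a new closed curve using only sub-segments of $S$ plus two negligible transversal connectors. Introduce two parallel translates $S^{+}$ and $S^{-}$ of $S$, placed just above and just below $S$, so that the crossing locations split into points $b_i^{+}\in S^{+}$ and $b_i^{-}\in S^{-}$; each upper arc-endpoint then sits at some $b_i^{+}$ and each lower one at some $b_i^{-}$. On $S^{+}$ add a family of disjoint segments matching the $M$ upper arc-endpoints in pairs, on $S^{-}$ add a family of disjoint segments matching the $M$ lower arc-endpoints in pairs (if $M$ is odd, one endpoint on each side is left unmatched), and finally add two infinitesimally short transversal connectors joining $b_i^{+}$ to $b_i^{-}$ at two chosen indices $i$. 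Because the added $S^{\pm}$-segments and the arc interiors never meet $S$, the resulting curve $\pi'$ touches $S$ only at the two transversal connectors, hence crosses $S$ at most twice; and $\pi'$ is obtained from $\pi$ by breaking it at exactly the $M-2$ crossings other than the two connector indices. All added $S^{\pm}$-segments lie inside $[b_1,b_M]\subseteq S$, and one can arrange that every elementary subinterval $[b_j,b_{j+1}]$ is covered by at most three of them (at most one from the $S^{+}$-matching, at most one from the $S^{-}$-matching, and at most one extra arising from the re-stitching below); since each of these three "layers" is a disjoint family inside $[b_1,b_M]$, the total added length is at most $3s$, the connectors contributing arbitrarily little.

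\textbf{Main obstacle.} The delicate point is that the two matchings on $S^{+}$ and $S^{-}$ must be chosen so that the arcs together with the added segments form a \emph{single} closed curve rather than a disjoint union of cycles — indeed the arcs plus any such matchings always form a $2$-regular multigraph, i.e.\ a union of cycles, and a careless matching (for instance one pairing an arc's own two endpoints) splits $\pi'$ into several components. To control this I would build $\pi'$ incrementally by walking along $\pi$: start along the first arc and, at each crossing other than the two designated connector points, redirect the walk by sliding along the appropriate translate of $S$ to an endpoint of an as-yet-untraversed arc and continue; with a prudent choice of redirections (jumping to the nearest unused endpoint on that side, and merging two cycles by a single transposition of the matching whenever the walk would close prematurely while arcs remain) the walk visits every arc exactly once and returns to its start, the extra segments introduced by the merges being absorbed into the third copy of $S$ in the $3s$ budget. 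Equivalently one can phrase this as an Eulerian-circuit/parity argument on the auxiliary multigraph whose edges are the $M$ arcs together with the candidate $S$-segments. Once a single-cycle reconnection is secured, the bound on the number of crossings and the $\le 3s$ bound on the added length are immediate from the construction, completing the proof.
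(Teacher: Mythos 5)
The paper does not actually prove this lemma: it is imported verbatim from Arora~\cite{Arora03}, so there is no in-paper argument to compare against and your attempt has to stand on its own. Your overall surgery is the standard one (break $\pi$ at the crossings, observe that each crossing contributes one arc-end germ on each side of $S$, re-pair the germs on each side by segments lying along $S$, and retain a couple of crossings as the places where $\pi'$ still crosses), so the strategy is the right one. But two steps are genuinely incomplete. First, the parity handling fails as written: when $M$ is odd you leave one endpoint unmatched on each side, which produces two degree-one ends, so the result is not a closed path at all. The correct move is to retain an odd number of crossings (a single connector) in that case, so that the remaining $M-1$ germs on each side admit perfect matchings and $\pi'$ crosses $S$ once. (A smaller slip: the claim that each arc lies strictly on one side of the line through $S$ is false in general, since an arc may cross that line beyond the ends of the segment; nothing later depends on it, but it should be dropped.)

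Second---and this is exactly the step you yourself flag as the main obstacle---the reconnection into a \emph{single} closed curve within the $3s$ budget is asserted rather than proved. The arcs together with any pair of side matchings form a $2$-regular graph, hence a disjoint union of cycles, and your repair is a greedy walk that performs ``a single transposition of the matching'' whenever it would close prematurely, with the extra length ``absorbed into the third copy of $S$.'' A transposition replaces two disjoint matching segments by two segments that overlap on the interval between them, so each merge double-covers part of $S$ on one side; with $k$ cycles you may need $k-1$ merges, and you give no argument that these double-covered portions can be kept disjoint (or otherwise of total length at most $s$), nor that merges do not cascade, since a transposition changes the matching and hence the component structure elsewhere. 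This is precisely where Arora's proof does real work: parity is settled by an explicit even/odd case distinction, and connectivity by an Eulerian-circuit argument on the multigraph of path pieces plus (suitably doubled) matching segments, with the length accounting built into the choice of which segments are doubled. To turn your outline into a proof you would need, for instance, to order the cycles by their leftmost crossing along $S$ and show that successive merges can be realized by transpositions whose overlap intervals are pairwise disjoint elementary intervals $[b_j,b_{j+1}]$, so the total merge cost is at most $s$; as it stands, neither the single-cycle property nor the $3s$ bound is established.
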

This lemma applies to embedded graphs as well.  Note: the patching
process connects paths in the tour that end on a common side of $S$ by
a subpath of $S$.

\begin{theorem}[TSP Property of Bricks] \label{thm:tsp-prop} Let $B$
  be a brick of graph $G$ with boundary $N \cup E \cup S \cup W$
  (where $E$ and $W$ are supercolumns).  Let $T$ be a tour in $G$ such
  that $T$ crosses $E$ and $W$ at most $2$ times each.  Let $H$ be the
  intersection of $T$ with $B$.  Then there is another subgraph of
  $B$, $H'$, such that:
  \begin{enumerate}[(T1)] 
  \item $H'$ has at most $\beta(\epsilon)$ joining vertices with
    $\partial B$.
  \item $\length{H'} \leq (1+5 \epsilon) \length{H}$.
  \item There is a tour in the edge set $T \setminus H \cup H'$ that
    spans the vertices in $\partial B \cap T$.
  \end{enumerate}
  In the above, $\beta(\epsilon) = \Oof(\kappa)$.
\end{theorem}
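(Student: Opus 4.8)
The plan is to treat $H = T \cap B$ as a collection of pairwise non-crossing paths drawn in the closed disk $B$, each with both endpoints on $\partial B$ (after first discarding any closed subwalks of $T$ that lie in the interior of $B$, which are unnecessary since all terminals in $B$ lie on $N \cup S \subseteq \partial B$), and to rewrite these paths so that they attach to $\partial B$ only at a controlled set of vertices. By the hypothesis that $T$ crosses each of the supercolumns $E$ and $W$ at most twice, at most four of the paths of $H$ meet $E \cup W$, and these contribute only $\Oof(1)$ joining vertices there; all remaining paths have both endpoints on $N \cup S$. I would classify and rewrite these:
\begin{itemize}
\item A path $P$ with both endpoints $a,b$ on $N$ is replaced by the subpath $N[a,b]$. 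Since $N$ is $0$-short in $B$ (Definition~\ref{def:mg}, property~4), $\length{N[a,b]} = \dist_B(a,b) \le \length{P}$, so this does not increase length.
\item A path $P$ with both endpoints $a,b$ on $S$ is replaced by the proper subpath $S[a,b]$; by property~4, $\length{S[a,b]} \le (1+\epsilon)\dist_B(a,b) \le (1+\epsilon)\length{P}$.
\item A path $P$ with one endpoint $u$ on $N$ and one endpoint $x$ on $S$ satisfies $\length{P} \ge \dist_B(x,N)$. Letting $s_i$ be the vertex of property~5 with $x \in S[s_i,s_{i+1})$, we reroute $P$ so that it first runs from $s_i$ along $S$ to $x$ and then follows $P$; the added length is $\dist_S(x,s_i) < \epsilon\,\dist_B(x,N) \le \epsilon\,\length{P}$, and the $S$-endpoint of the rerouted path is now one of $s_0,\dots,s_k$, of which there are at most $\kappa+1$.
\end{itemize}
After these replacements the paths meeting $S$ either land at one of the $\le\kappa+1$ vertices $s_j$ or are themselves subpaths of $S$, and those meeting $N$ are subpaths of $N$ or single-endpoint paths descending from $N$.

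The next step, which I expect to be the main obstacle, is to bound the number of distinct points at which the rewritten $H$ still meets $N$ and $S$: a priori many paths could land on a single interval $S[s_i,s_{i+1})$ at distinct points, or on $N$ at distinct points, so that the number of joining vertices is not yet bounded. Here I would invoke Arora's Patching Lemma (Lemma~\ref{lem:patching}) with the line segment taken to be, in turn, each of the $\le\kappa+1$ intervals $S[s_i,s_{i+1}]$ and each of $\Oof(\epsilon^{-1})$ pieces into which $N$ is cut (e.g.\ so that each piece carries a bounded fraction of $\length{N}$, or so that property~5 relates a piece crossed by $H$ to the lengths of the descending paths crossing it): whenever the current closed walk crosses such a segment of length $s$ more than twice, the lemma breaks all but two of those crossings and adds at most $3s$ of length lying on the segment, after which that segment is crossed at most twice. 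The delicate point is the charging of these $3s$ terms --- one must argue that a segment crossed three or more times is accompanied by enough length of $H$ (for an $S$-interval, by paths descending toward $N$, whose lengths dominate $\dist_B(\cdot,N)$ and hence, via property~5, the interval length; for a piece of $N$, by the descending paths it separates) so that the total added length is $\Oof(\epsilon)\length{H}$; in particular one must take care of paths that hug $N$ or $S$ without descending far into $B$, and segment $N$ correctly. Since the Patching Lemma, as noted after its statement, reconnects the severed paths by subpaths of the segment and so yields a closed walk, and since each earlier replacement inserted a detour that still passes through the original crossing vertex, the resulting edge set $T \setminus H \cup H'$ still contains a closed walk spanning every vertex of $\partial B \cap T$, which gives (T3).

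Assembling the pieces: after patching, $H'$ meets $E\cup W$ in $\Oof(1)$ vertices, meets $S$ in at most two vertices per interval together with the $s_j$ (so $\Oof(\kappa)$ vertices), and meets $N$ in $\Oof(\epsilon^{-1})$ vertices, for a total of $\Oof(\kappa)$ joining vertices, establishing (T1) with $\beta(\epsilon)=\Oof(\kappa)$. For (T2), $\length{H'}$ is at most $\length{H}$ after the $N$-subpath replacements, times $(1+\epsilon)$ from the $S$-subpath replacements, plus $\epsilon\length{H}$ from sliding $S$-endpoints to the $s_j$, plus $\Oof(\epsilon)\length{H}$ from patching; running the whole argument with $\epsilon$ replaced by a suitable constant fraction of itself absorbs the constants and yields $\length{H'}\le(1+5\epsilon)\length{H}$. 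The crossing-reduction-and-charging step of the previous paragraph is the crux of the whole proof.
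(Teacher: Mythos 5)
Your decomposition of $H$ into $E\cup W$ paths, $N$-to-$N$ and $S$-to-$S$ paths (replaced by boundary subpaths via $0$-shortness and $\epsilon$-shortness), and $S$-to-$N$ paths grouped by the intervals $S[s_i,s_{i+1})$ matches the paper's skeleton, but the step you yourself flag as the crux is exactly where the proposal breaks down, and the specific plan you sketch for it does not work. Applying the Patching Lemma to the \emph{full} intervals $S[s_i,s_{i+1}]$, or to pieces of $N$ chosen as fixed fractions of $\ell(N)$, incurs an added length of $3s$ per segment where $s$ is the segment length, and neither $\ell(S[s_i,s_{i+1}])$ nor $\ell(N)$ is bounded in terms of $\ell(H)$: Property~5 of Definition~\ref{def:mg} only controls the distance along $S$ from $s_i$ to a point $x$ \emph{in terms of} $\dist_B(x,N)$, i.e.\ in terms of a path of $H$ actually descending from $x$, and nothing controls $\ell(N)$ at all. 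So your charging ``by the descending paths'' is not available for the segments you chose, and you give no mechanism at all for bounding the $N$-side joining vertices: after your rerouting, every $S$-to-$N$ path still attaches to $N$ at its own vertex (and, incidentally, still attaches to $S$ at $x$, since the rerouted path passes through $x$, so the rerouting by itself removes nothing). The $N$ side is the real difficulty, and patching along $N$ cannot be the answer.

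The paper resolves both points with one patching per interval, applied not to $S[s_i,s_{i+1}]$ but to $Q_i$, the subpath of $S$ from $s_i$ to the $S$-endpoint $x$ of the path $P_i\in{\cal X}_i$ farthest toward $s_{i+1}$; Property~5 then gives $\ell(Q_i)\le\epsilon\,\ell(P_i)\le\epsilon\,\ell({\cal X}_i)$, so the $3\ell(Q_i)$ patching cost (plus two extra copies of $Q_i$ added to keep spanning the vertices of $Q_i$ on the tour) is charged to ${\cal X}_i$ itself, giving the $(1+5\epsilon)$ factor. Crucially, after this patching the tour crosses $Q_i$ at most twice, and all remaining attachments along $Q_i$ organize into $N$-to-$N$ composite paths (descend from $N$, run along $Q_i$, ascend back to $N$); replacing each such composite by the minimal subpath of $N$ between its endpoints costs nothing by $0$-shortness and simultaneously deletes its attachments on \emph{both} $S$ and $N$. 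This is how the $N$-side joining vertices disappear without ever patching along $N$, leaving at most $4$ joining vertices per interval and $6\kappa+8$ in total. Your proposal is missing this idea; as written, the length added by your patching step cannot be bounded by $\Oof(\epsilon)\ell(H)$, so (T1) and (T2) are not simultaneously established.
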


\begin{proof}
  Let $H$ be the subgraph of $T$ that is strictly enclosed by $B$
  (i.e., $H$ contains no edges of $\partial B$).  We can write $H$ as
  the union of 3 sets of minimal $\partial B$-to-$\partial B$ paths
  ${\cal P}_{S \vee N} \cup {\cal P}_{E \vee W} \cup {\cal P}_{S \wedge N}$
  where paths in: ${\cal P}_{S \vee N}$ either start and end on $S$ or
  start and end on $N$; ${\cal P}_{E \vee W}$ start on $E$ or $W$ and end on
  $\partial B$; ${\cal P}_{S \wedge N}$ start on $S$ and end on
  $N$. For the constructions below, refer to Figure~\ref{fig:tour}.

  \begin{figure}[h]
    \centering 
    \subfigure[]{\includegraphics[scale=0.35]{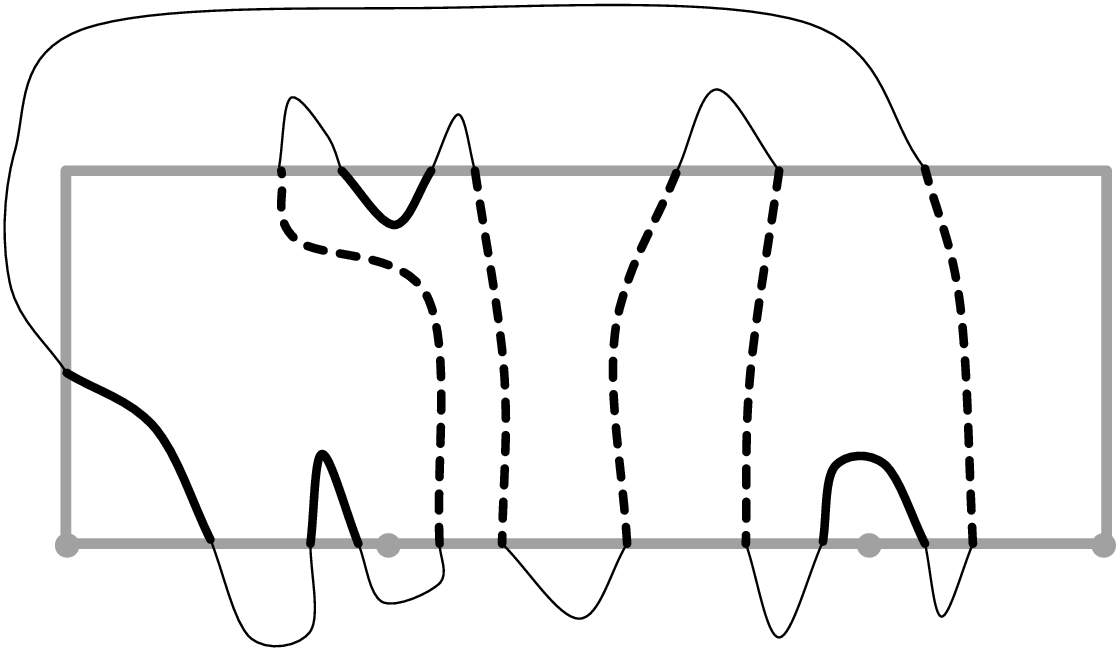}}
    \subfigure[]{\includegraphics[scale=0.35]{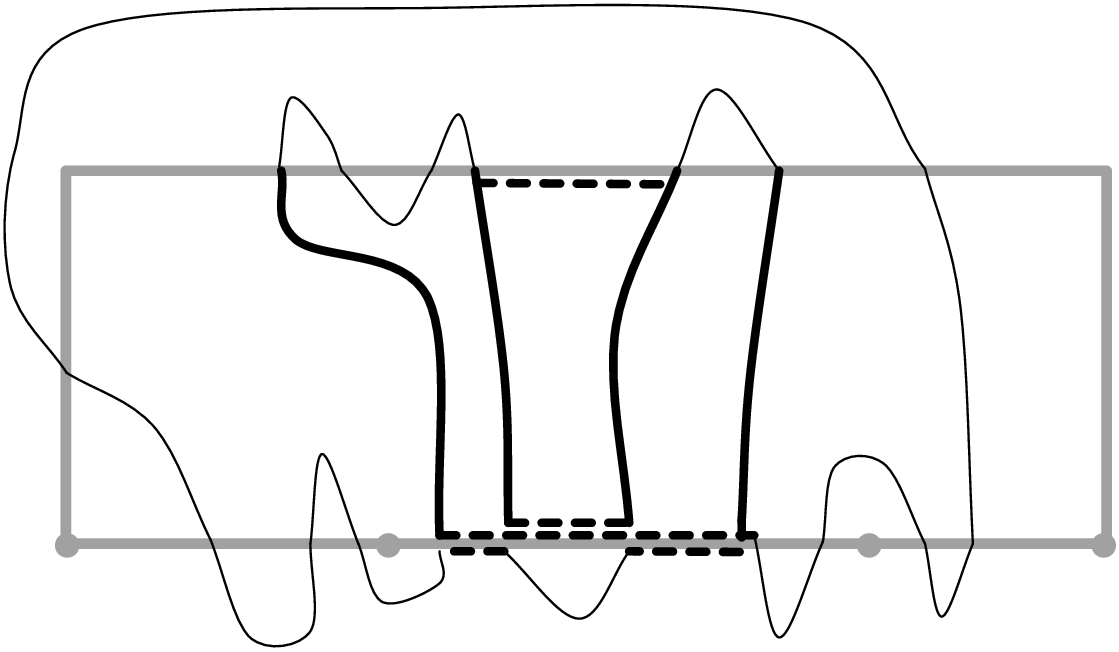}}
    \subfigure[]{\includegraphics[scale=0.35]{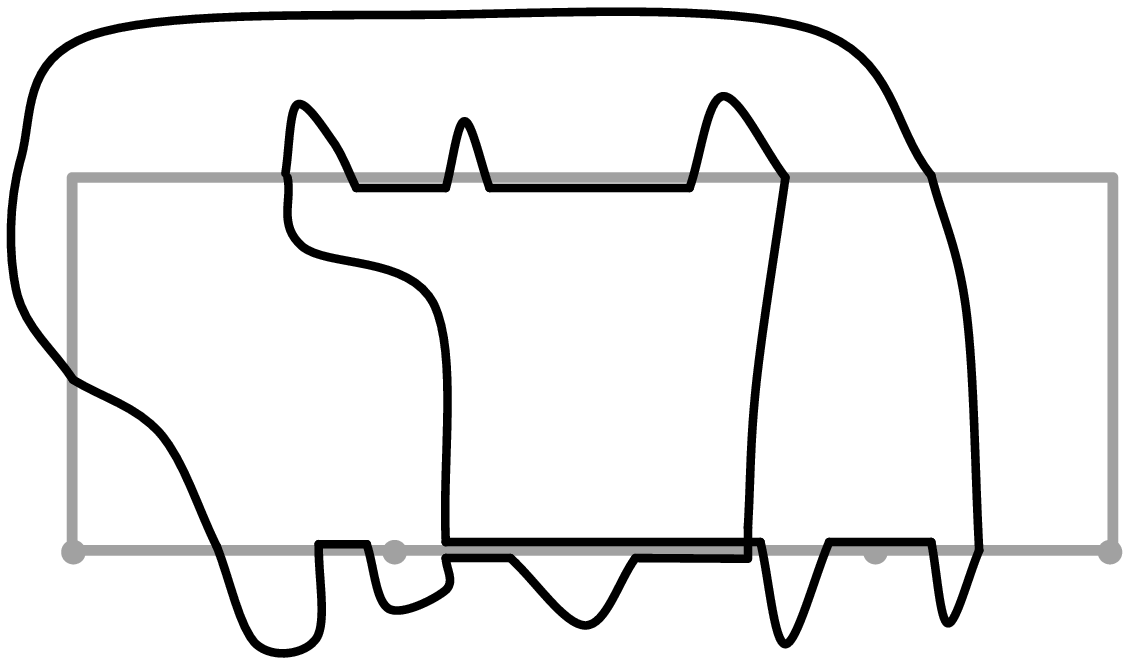}}
    \caption{(a)~A brick with a tour crossing through it.  The bold
      paths are in $H$.  The bold vertices are $s_0, s_1, s_2, \ldots,
      s_k$. The dotted paths are in ${\cal P}_{S \wedge N}$, the first four of
      which are in ${\cal X}_1$.  (b)~The patching process introduces
      the dotted paths on the lower boundary $S$ of the brick and
      reroutes the tour to cross $S$ twice between $s_1$ and $s_2$.
      The dotted subpath $L$ of the top boundary $N$ of the brick is
      used to replace the portion of the tour between its endpoints.
      (c)~The tour after the entire construction given by
      Theorem~\ref{thm:tsp-prop}.}
    \label{fig:tour}
  \end{figure}

  Since $T$ crosses $E$ and $W$ at most 4 times, $|{\cal P}_{E \vee W}| \leq
  4$.  Therefore, this set of paths result in at most 8 joining
  vertices with $\partial B$.

  For each path $P \in {\cal P}_{S \vee N}$, let $\widehat P$ be the minimal
  subpath of $\partial B$ that spans $P$'s endpoints.  Let $\widehat {\cal
    P}_{S\vee N}$ be the resulting set of paths.  As $N$ is 0-short
  and $S$ is $\epsilon$-short, we have
  \begin{equation}
    \length{\widehat {\cal P}_{S\vee N}} \leq
    (1+\epsilon)\length{{\cal P}_{S\vee N}}. \label{eq:S}
  \end{equation}
  Since $\widehat {\cal P}_{S \vee N}$ are subpaths of $\partial B$,
  they have no joining vertices with $\partial B$.  Since paths in
  $\widehat {\cal P}_{S \vee N}$ correspond one-to-one with paths in
  ${\cal P}_{S \vee N}$, $T \setminus {\cal P}_{S \vee N} \cup
  \widehat {\cal P}_{S \vee N}$ is a tour.  See
  Figure~\ref{fig:tour}(a).

  It remains to deal with the paths in ${\cal P}_{S \wedge N}$.  Let
  $s_0, s_1, s_2, \ldots, s_k$ (where $k \leq \kappa$) be the vertices
  of $S$ guaranteed by the properties of the bricks (see
  Definition~\ref{def:mg}).  Let ${\cal X}_i$ be the subset of paths
  of ${\cal P}_{S\wedge N}$ that start on $S[s_i,s_{i+1})$, i.e.\ the
  vertices between $s_i$ and $s_{i+1}$ including $s_i$ but not
  $s_{i+1}$.

  If $|{\cal X}_i| > 2$, we do the following: Let $P_i$ be the path in
  ${\cal X}_i$ whose endpoint $x$ on $S$ is closest to $s_{i+1}$.  Let
  $Q_i$ be the subpath of $S$ from $s_i$ to $x$.  By the properties of
  the bricks, $\length{Q_i} \leq \epsilon \length{P_i}$.  Apply the
  Patching Lemma to the tour $T$ and path $Q_i$; the new tour, $T'$,
  crosses $Q_i$ at most twice.  However $T'$ may still have many
  joining vertices with $Q_i$.  Let ${\cal Q}_i$ be the subpaths of
  $Q_i$ that are added to the tour.

  Let ${\cal L}_i$ be a maximal set of $N$-to-$N$ paths in ${\cal X}_i
  \cup {\cal Q}_i$. ${\cal L}_i$ accounts for all but two
  (corresponding to the two crossings of $T'$ with $Q_i$) of the
  joining vertices of $T'$ with $Q_i$.  For each path $L \in {\cal
    L}_i$, let $\widehat L$ be the minimal subpath of $N$ that spans
  $L$'s endpoints and let $\widehat {\cal L}_i$ be the resulting set
  of paths.  Replacing ${\cal L}_i$ with $\widehat {\cal L}_i$ is
  still a tour, since the paths have a one-to-one correspondence.
  However, the resulting tour may no longer span all terminals on $Q_i$.
  Adding in two copies of $Q_i$ fixes this.  Since $N$ is 0-short,
  $\length{\widehat {\cal L}_i} \leq \length{{\cal L}_i}$.

  Let $\widehat {\cal X}_i = {\cal X}_i \cup {\cal Q}_i \setminus
  {\cal L}_i \cup \widehat {\cal L}_i \cup Q_i \cup Q_i$.  Replacing
  ${\cal X}_i$ with $\widehat {\cal X}_i$ is still a tour, as argued
  above.  Since the additional length added is at most 5 copies of
  $Q_i$, we have:
  \begin{equation}
    \length{\widehat {\cal X}_i} \leq \length{{\cal X}_i} + 5 \length{Q_i} \leq \length{{\cal
        X}_i} + 5\epsilon \length{P_i} \leq (1+5\epsilon) \length{{\cal X}_i} \label{eq:Q}
  \end{equation}
  Since ${\cal L}_i$ accounted for all but 2 of the joining vertices
  of $T'$ with $Q_i$ - so all but 4 of the joining vertices of ${\cal
    X}_i$ with $\partial B$, and $\widehat {\cal L}_i$ has no joining
  vertices with $\partial B$, $\widehat {\cal X}_i$ has at most 4
  joining vertices with $\partial B$.

  Let $\widehat {\cal P}_{S \wedge N} = 
  \bigcup_i \widehat {\cal X}_i$.
  $\widehat {\cal P}_{S \wedge N}$
  has at most $6\kappa$ joining vertices with $\partial B$ and,
  by Equation~\eqref{eq:Q},
  \begin{equation}
    \length{\widehat {\cal P}_{S\wedge N}} \leq (1+5\epsilon)\length{{\cal P}_{S \wedge N}}.\label{eq:NS}
  \end{equation}
  
  Let $\widehat H$ be the union of the paths in ${\cal P}_{E \vee W}$,
  $\widehat {\cal P}_{S\vee N}$ and $\widehat {\cal P}_{S\wedge N}$.
  Combining Equations~(\ref{eq:S}) and~(\ref{eq:NS}), we find that
  $\length{\widehat H} \leq (1+5\epsilon) \length{H}$.  By
  construction, the edges in $T \setminus H \cup \widehat H$ contain
  a tour.  $\widehat H$ has $6\kappa + 8$ joining
  vertices with $\partial B$.\qed
\end{proof}

\subsection{Proof of the Structure Theorem for \subtsp} \label{sec:struct-TSP}

Using the TSP Property of Bricks, we prove the Structure Theorem
(Section~\ref{sec:structure-thm}) for \subtsp.

Let $T$ be the optimal tour spanning terminals $Q$ in $G$.  From $T$
we build a tour $\widehat T$ spanning $Q$ in $\BC(MG)$ such that
$\length{\widehat T} \leq (1+c \epsilon) \length{T}.$

Let $C$ be a supercolumn.  By the Patching Lemma, if $T$ crosses $C$
at least thrice, we can add to $T$ at most three copies of $C$ and
create a new tour that crosses $C$ at most twice.  Let $T_1$ be the
tour that results from applying the Patching Lemma to each
supercolumn.  Because the sum of the weights of the supercolumns is at
most $\epsilon \OPT$,
\begin{equation}
  \length{T_1} \leq (1+3\epsilon) \length {T}.\label{eq:s1}
\end{equation}

Let $B$ be a brick of $G$.  Let $H$ be the intersection of $T_1$ with
$B$.  By the construction above, $T_1$ satisfies the requirements of
Theorem~\ref{thm:tsp-prop}: let $H'$ be the guaranteed subgraph of
$B$.  We replace $H$ with $H'$ in $T_1$.  Let $T_2$ be the tour
resulting from such replacements over all the bricks.
Theorem~\ref{thm:tsp-prop} guarantees that 
\begin{equation}
  \length{T_2} \leq (1+ 5 \epsilon) \length{T_1}.\label{eq:s2}
\end{equation}

Now we map the edges of $T_2$ to a subgraph of $\BC(MG)$.  Every edge
of $G$ has at least one corresponding edge in $\BC(MG)$.  For every
edge $e$ of $T_2$, we select one corresponding edge in $\BC(MG)$ as
follows: if $e$ is an edge of $MG$ select the corresponding mortar
edge of $\BC(MB)$, otherwise select the unique edge corresponding to
$e$ in $\BC(MG)$.  (An edge of $\BC(MG)$ is a {\em mortar edge} if it
is in $MG$.)  This process constructs a subgraph $T_3$ of $\BC(MG)$
such that
\begin{equation}\label{eq:s3}
  \length{T_3} = \length{T_2}.
\end{equation}

Because $T_3$ is not connected, we connect it via portal and mortar
edges.  Let $V_B$ be the set of joining vertices of $T_3$ with
$\partial B$ for a brick $B$ of $\BC(MG)$.  For any vertex $v$ on the
interior boundary $\partial B$ of a brick, let $p_v$ be the portal on
$\partial B$ that is closest to $v$, let $P_v$ be the shortest
$v$-to-$p_v$ path along $\partial B$ and let $P_v'$ be the
corresponding path of mortar edges.  Let $e$ be the portal edge
corresponding to $p_v$.  Add $P_v$, $P_v'$, and $e$ to $T_3$.  Repeat
this process for every $v \in V_B$ and for every brick $B$, building a
graph $\widehat T$. This completes the definition of $\widehat T$.
From the construction, $\widehat T$ is a tour spanning the terminals
$Q$ in $\BC(MG)$.

Now we analyze the increase in length:
\begin{equation}\label{eq:s4}
  \length{\widehat T}\leq \length{T_3}+ \sum_{B \in {\cal B}} \sum_{v
    \in V_B} (\length{P_v} + \length{e} + \length{P'_v}),
\end{equation}
and we have:
\begin{eqnarray*}
  \sum_{B \in {\cal B}} \sum_{v \in V_B} \length{P_v} +
  \length{e} + \length{P'_v} 
  &=& 2 \sum_{B \in {\cal B}} \sum_{v \in V_B} \length{P_v}\mbox{,
    because }\ell({\mbox{portal edges}})=0 \\
  &\leq& 2 \sum_{B \in {\cal B}} \sum_{v \in V_B} \length{\partial
    B}/\theta\mbox{, by the choice of portals} \\
  &\leq& 2 \sum_{B \in {\cal B}}
  \frac{\beta}{\theta} \length{\partial B}\mbox{, by
    Theorem~\ref{thm:tsp-prop}}\\
  &\leq& 2 \frac{\beta}{\theta} 2\alpha (\epsilon^{-1},g)
  \OPT\mbox{, by Theorem~\ref{thm:genus-mortar-graph}}\\
  &\leq& \epsilon\, \OPT\mbox{, for $\theta = 4\epsilon^{-1}\beta
    \alpha$, as required.}
\end{eqnarray*}

Combining Equations~(\ref{eq:s1}),~(\ref{eq:s2}),~(\ref{eq:s3})
and~(\ref{eq:s4}), we obtain $\length{\widehat T} \leq
(1+3\epsilon)(1+5\epsilon)\length{T} + \epsilon \OPT \leq (1+c \epsilon) \OPT$.
The Structure Theorem is proved for the \subtsp problem.

\section{Conclusion and Outlook}

We presented a framework to obtain $\PTAS$es on bounded-genus
graphs for subset-connectivity problems, where we are given a graph
and a set of terminals and require a certain connectivity among the
terminals. Specifically, we obtained the first $\PTAS$ for \steiner on
bounded-genus graphs running in $\Oof(n\log n)$-time with a constant
that is singly exponential in $\epsilon^{-1}$ and the genus of the
graph. Our method is based on the framework of Borradaile et
al.~\cite{BorradaileKM09} for planar graphs; in fact, we generalize
their work in the sense that basically any problem that is shown to
admit a $\PTAS$ on planar graphs using their framework easily
generalizes to bounded-genus graphs using the methods presented in
this work. In particular, this gives rise to $\PTAS$es in
bounded-genus graphs for \subtsp (Section~\ref{sec:subtsp}),
$\set{0,1,2}$-edge-connected \survive~\cite{BorradaileKlein08}, and
also \steinerforest~\cite{BateniHM10}.

A natural question is to ask what other classes of graphs admit a
\PTAS for the problems discussed in this work. An important
generalization of bounded-genus graphs are proper classes of graphs
that are closed under taking \emph{minors}. Such \emph{$H$-minor-free
  graphs} have earned much attention in recent years. Many hard
optimization problems have been shown to admit $\PTAS$es and
fixed-parameter algorithms on these classes of graphs; see,
e.g.,~\cite{DemaineHajiaghayi05-PTAS,Grohe03}. But subset-connectivity
problems, specifically \subtsp and \steiner, remain important open
problems~\cite{Grohe03,DemaineHM07}. Both a spanner theorem and a
contraction decomposition theorem are still missing for the
$H$-minor-free case. Very often, results on $H$-minor-free graphs are
first shown for planar graphs, then extended to bounded-genus graphs,
and finally obtained for $H$-minor-free graphs. This is due to the
powerful decomposition theorem of Robertson and
Seymour~\cite{GraphMinors16} that essentially says that every
$H$-minor-free graph can be decomposed into a number of parts that are
``almost embeddable'' in a bounded-genus surface. We conjecture that
our framework extends to $H$-minor-free graphs via this decomposition
theorem. The advantage of our methodology is that handling weighted
graphs and subset-type problems are naturally incorporated,
and thus it might be possible to combine all the steps for a
potential \PTAS into a single framework for $H$-minor-free graphs
based on what we presented in this work.  Hence, whereas our
work is an important step towards this generalization, still a number
of hard challenges remain; see also~\cite{DemaineHM07} for a further
discussion on this matter.




\end{document}